\definecolor{Gray}{gray}{0.8}
\newif\if@restonecol
\newenvironment{changemargin}[2]{\begin{list}{}{
	\setlength{\topsep}{0pt}\setlength{\leftmargin}{0pt}
	\setlength{\rightmargin}{0pt}
	\setlength{\listparindent}{\parindent}
	\setlength{\itemindent}{\parindent}
	\setlength{\parsep}{0pt plus 1pt}
	\addtolength{\leftmargin}{#1}\addtolength{\rightmargin}{#2}
	}\item}
	{\end{list}}
\newenvironment{mitemize}{
	\begin{changemargin}{-3pt}{-0cm}
	\vspace{-10pt}
	\hspace{-5pt}
	\begin{itemize}
	\setlength{\itemsep}{3pt}}
	{\end{itemize}
	\vspace{2pt}
	\end{changemargin}}
\newcommand{\ssup}[2]{{#1}^{\scaleobj{0.8}{#2}}}
\newcommand{\ssub}[2]{{#1}_{\scaleobj{0.8}{#2}}}
\newcommand{\sboth}[3]{{#1}_{\scaleobj{0.8}{#2}}^{\scaleobj{0.8}{#3}}}
\newcommand{\msec}[1]{\S\,\ref{#1}}
\newcommand{\mref}[1]{\,\ref{#1}}
\newcommand{\meq}[1]{Eqn\,(\ref{#1})}
\newcommand{\mcite}[1]{\cite{#1}}
\providecommand{\leadsfrom}{%
  \mathrel{\mathpalette\reflect@squig\relax}%
}
\newcommand{\reflect@squig}[2]{%
  \reflectbox{$\m@th#1\leadsto$}%
}
\newcommand{\dnn}{DNN\xspace}
\newcommand{\dnns}{DNNs\xspace}
\def\eqref#1{equation~\ref{#1}}
\def\1{\bm{1}}
\DeclareMathAlphabet{\mathsfit}{\encodingdefault}{\sfdefault}{m}{sl}
\SetMathAlphabet{\mathsfit}{bold}{\encodingdefault}{\sfdefault}{bx}{n}
\DeclareMathOperator{\sign}{sgn}
\titlespacing\section{0pt}{4pt}{4pt}
\titlespacing\subsection{0pt}{3pt}{3pt}
\titlespacing\subsubsection{0pt}{2pt}{2pt}
\titleformat{\subsection}{\large\bfseries}{\thesubsection}{1em}{}
\newtheorem{prop}{Proposition}
\newcommand\matt{{\scaleobj{0.8}{\top}}}
\newcommand{\system}{\textsc{Adv$^\mathtt{2}$}\xspace}
\newcommand{\bx}{{x_\circ}}
\newcommand{\bbm}{{m_\circ}}
\newcommand{\ax}{{x_*}}
\newcommand{\ay}{c_t}
\newcommand{\tm}{m_t}
\newcommand{\am}{{m_*}}
\newcommand{\grd}{{\sc Grad}\xspace}
\newcommand{\cam}{{\sc Cam}\xspace}
\newcommand{\gcam}{{\sc GradCam}\xspace}
\newcommand{\mask}{{\sc Mask}\xspace}
\newcommand{\rts}{{\sc Rts}\xspace}
\newcommand{\rtsa}{{\sc Rts}$^\mathtt{A}$\xspace}
\newcommand{\ig}{{\small{\sc Ig}}\xspace}
\newcommand{\rnet}{ResNet\xspace}
\newcommand{\dnet}{DenseNet\xspace}
\newcommand{\imls}{IDLS\xspace}
\newcommand{\aid}{\textsc{Aid}\xspace}
\newcommand{\imlses}{IDLSes\xspace}
\newcommand{\pgd}{\textsc{Pgd}\xspace}
\newcommand{\adef}{\textsc{StAdv}\xspace}
\newcommand{\iou}{IoU\xspace}
\newcommand{\fs}{{\sc Fs}\xspace}
\begin{document}

\date{}

\title{Interpretable Deep Learning under Fire}

\author{}

\author{
{\rm Xinyang Zhang}$^\ast$  \quad {\rm Ningfei Wang}$^\star$  \quad {\rm Hua Shen}$^\ast$ \quad {\rm Shouling Ji}$^\dagger$ \quad {\rm Xiapu Luo}$^\ddagger$  \quad {\rm Ting Wang}$^\ast$\\
$^\ast$Pennsylvania State University \quad $^\star$University of California Irvine \\ $^\dagger$Zhejiang University and Alibaba-ZJU Joint Institute of Frontier Technologies\\
$^\ddagger$Hong Kong Polytechnic University
} 

\maketitle


\begin{abstract}

Providing explanations for deep neural network (\dnn) models is crucial for their use in security-sensitive domains. A plethora of interpretation models have been proposed to help users understand the inner workings of \dnns: how does a \dnn arrive at a specific decision for a given input? The improved interpretability is believed to offer a sense of security by involving human in the decision-making process. Yet, due to its data-driven nature, the interpretability itself is potentially susceptible to malicious manipulations, about which little is known thus far.

Here we bridge this gap by conducting the first systematic study on the security of interpretable deep learning systems (\imlses). We show that existing \imlses are highly vulnerable to adversarial manipulations. Specifically, we present \system, a new class of attacks that generate adversarial inputs not only misleading target \dnns but also deceiving their coupled interpretation models. Through empirical evaluation against four major types of \imlses on benchmark datasets and in security-critical applications (e.g., skin cancer diagnosis), we demonstrate that with \system the adversary is able to arbitrarily designate an input's prediction and interpretation. Further, with both analytical and empirical evidence, we identify the prediction-interpretation gap as one root cause of this vulnerability -- a \dnn and its interpretation model are often misaligned, resulting in the possibility of exploiting both models simultaneously. Finally, we explore potential countermeasures against \system, including leveraging its low transferability and incorporating it in an adversarial training framework. Our findings shed light on designing and operating \imlses in a more secure and informative fashion, leading to several promising research directions.

\end{abstract}


\section{Introduction}
\label{sec:introduction}

The recent advances in deep learning have led to breakthroughs in many long-standing machine learning tasks (e.g., image classification\mcite{he:resnet}, natural language processing\mcite{sutskever:nips:2014}, and even playing Go\mcite{silver:nature:alphago}), enabling use cases previously considered strictly experimental.

However, the state-of-the-art performance of deep neural network (\dnn) models is often achieved at the cost of interpretability. It is challenging to intuitively understand the inference of complicated {\dnns} -- how does a \dnn arrive at a specific decision for a given input -- due to their high non-linearity and nested architectures. This is a major drawback for applications in which the interpretability of decisions is a critical prerequisite, while simple black-box predictions cannot be trusted by default. Another drawback of \dnns is their inherent vulnerability to adversarial inputs -- maliciously crafted samples to trigger target \dnns to malfunction\mcite{szegedy:iclr:2014,carlini:cw,kurakin:advbim} -- which  leads to unpredictable model behaviors and hinders their use in security-sensitive domains.

\begin{figure}
\centering
\epsfig{file = 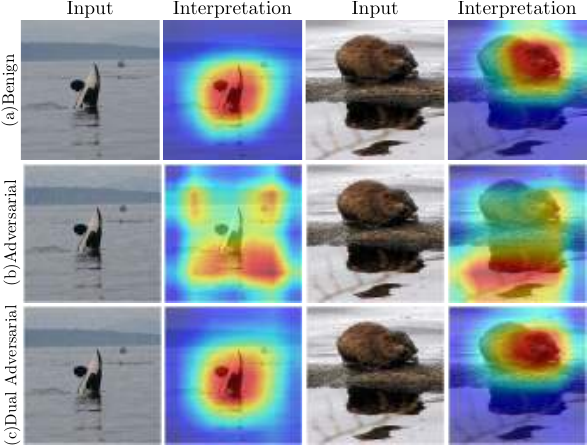, width=70mm}
\caption{\small Sample (a) benign, (b) regular adversarial, and (c) dual adversarial inputs and interpretations on {\rnet}\mcite{he:resnet} (classifier) and {\sc Cam}\mcite{zhou:cam} (interpreter). \label{fig:sample}}
\end{figure}

The drawbacks have spurred intensive research on improving the \dnn interpretability via providing explanations at either model-level\mcite{Karpathy:2016:iclr,Sabour:2017:nips,zhang:2018:interpretable} or instance-level\mcite{Simonyan:gradsaliency,fong:mask,tulio:sigkdd:2016,Dabkowski:nips:2017}. For example, in Figure\mref{fig:sample}\,(a), an attribution map highlights an input's most informative part with respect to its classification, revealing their causal relationship. Such interpretability helps users understand the inner workings of \dnns, enabling use cases including model debugging\mcite{Nguyen:2014:cvpr}, digesting security analysis results\mcite{Guo:2018:ccs}, and detecting adversarial inputs\mcite{Du:2018:kdd}. For instance, in Figure\mref{fig:sample}\,(b), an adversarial input, which causes the target \dnn to deviate from its normal behavior, generates an attribution map highly distinguishable from its benign counterpart, and is thus easily detectable.

As illustrated in Figure\mref{fig:imls}, a \dnn model (classifier), coupled with an interpretation model (interpreter), forms an interpretable deep learning system (\imls). The enhanced interpretability of \imlses is believed to offer a sense of security by involving human in the decision process\mcite{Tao:nips:2018}. However, given its data-driven nature, this interpretability itself is potentially susceptible to malicious manipulations. Unfortunately, thus far, little is known about the security vulnerability of \imlses, not to mention mitigating such threats.

\begin{figure}
\centering
\epsfig{file = 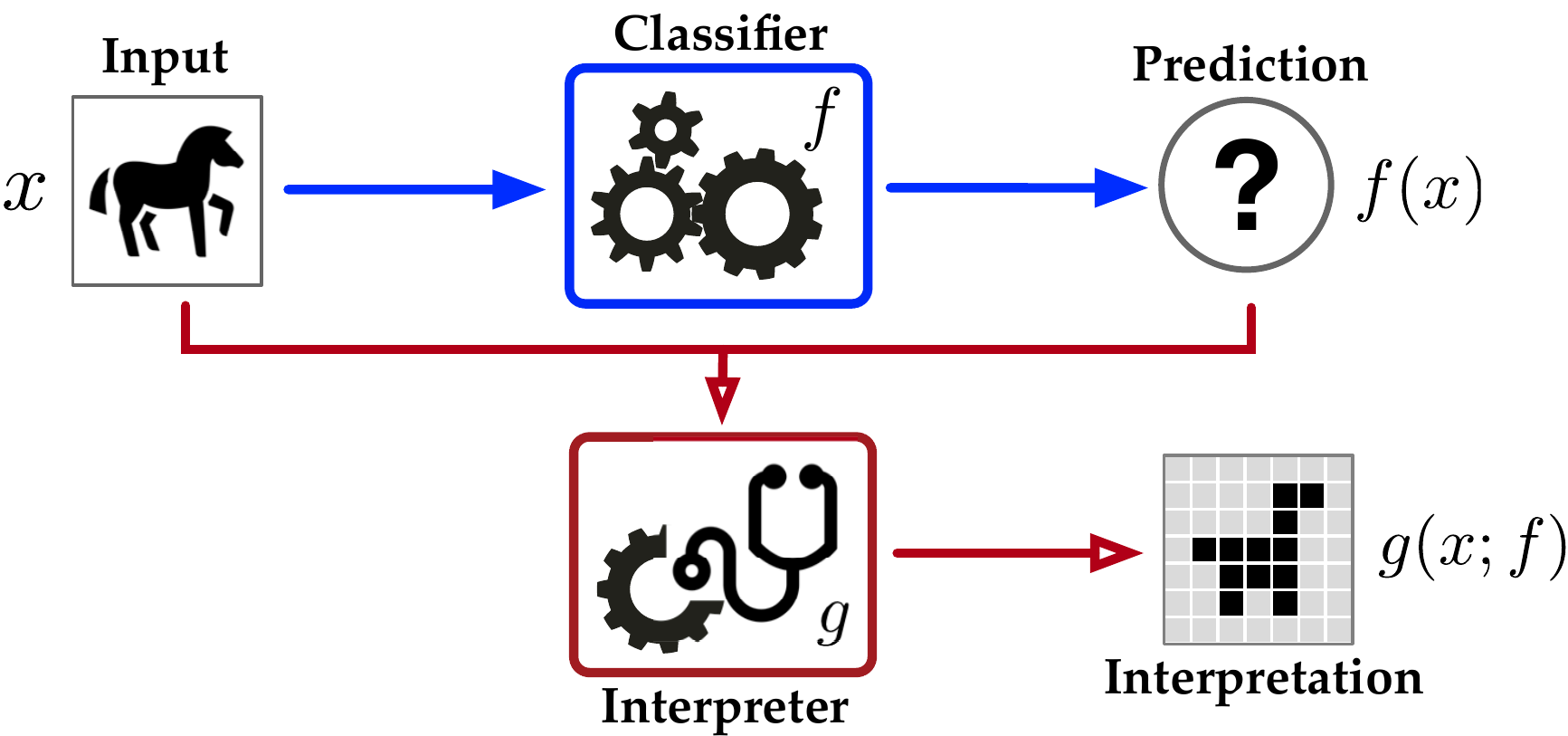, width=74mm}
\caption{\small Workflow of an interpretable deep learning system (\imls). \label{fig:imls}
}
\end{figure}


\vspace{2pt}
{\bf Our Work.} To bridge the gap, in this paper, we conduct a comprehensive study on the security vulnerability of \imlses, which leads to the following interesting findings.

First, we demonstrate that existing \imlses are highly vulnerable to adversarial manipulations. We present \system,
a new class of attacks that generate adversarial inputs not only misleading a target DNN but also deceiving its coupled interpreter. By empirically evaluating \system against four major types of \imlses on benchmark datasets and in security-critical applications (e.g., skin cancer diagnosis), we show that it is practical to generate adversarial inputs with predictions and interpretations arbitrarily chosen by the adversary. For example, Figure\mref{fig:sample}\,(c) shows adversarial inputs that are misclassified by target \dnns and also interpreted highly similarly to their benign counterparts. Thus the interpretability of \imlses merely provides limited security assurance.

Then, we show that one possible root cause of this attack vulnerability lies in the prediction-interpretation gap: the interpreter is often misaligned with the classifier, while the interpreter's interpretation only partially explains the classifier's behavior, allowing the adversary to exploit both models simultaneously. This finding entails several intriguing questions: (i) what, in turn, is the possible cause of this gap? (ii) how does this gap vary across different interpreters? (iii) what is its implication for designing more robust interpreters? We explore all these key questions in our study.

Further, we investigate the transferability of \system across different interpreters. We note that it is often difficult to find adversarial inputs transferable across distinct types of interpreters, as they generate interpretations from complementary perspectives (e.g., back-propagation, intermediate representations, input-prediction correspondence). This finding points to training an ensemble of interpreters as one potential countermeasure against \system.

Finally, we present adversarial interpretation distillation (\aid), an adversarial training framework which integrates \system in training interpreters. We show that \aid effectively reduces the prediction-interpretation gap and potentially helps improve the robustness of interpreters against \system.

To our best knowledge, this work represents the first systematic study on the security vulnerability of existing \imlses. We believe our findings shed light on designing and operating \imlses in a more secure and informative manner.

\vspace{2pt}
{\bf Roadmap.} The remainder of the paper proceeds as follows. \msec{sec:ground} introduces fundamental concepts; \msec{sec:attack} presents the \system attack and details its implementation against four major types of interpreters; \msec{sec:eval} empirically evaluates its effectiveness; \msec{sec:tradeoff} explores the fundamental causes of the attack vulnerability and discusses possible countermeasures; \msec{sec:literature} surveys relevant literature; the paper is concluded in \msec{sec:conclusion}.

\begin{table}{\footnotesize
\centering
			\begin{tabular}{c | l }
				Notation & Definition\\
				\hline
				\hline
				$f$, $g$ & target classifier, interpreter \\
				$\bx$, $\ax$ & benign, adversarial input \\
				$\ay$, $\tm$ & adversary's target class, interpretation\\
        $x[i]$ & $i$-th dimension of $x$\\
				$\epsilon$ & perturbation magnitude bound\\
				$\| \cdot \|$ & vector norm\\
				$\ell_\mathrm{int}$, $\ell_\mathrm{prd}$, $\ell_\mathrm{adv}$   & interpretation, prediction, overall loss\\
				$\alpha$ & learning rate\\
				\hline
			\end{tabular}
    \caption{\small Symbols and notations. \label{tab:notations}}}
\end{table}

\section{Preliminaries}
\label{sec:ground}

We begin with introducing a set of fundamental concepts and assumptions. The symbols and notations used in this paper are summarized in Table\mref{tab:notations}.

\vspace{2pt}
{\bf Classifier} -- In this paper, we primarily focus on predictive tasks (e.g., image classification\mcite{Deng:2009:cvpr}), in which a \dnn $f$ (i.e., classifier) assigns a given input $x$ to one of a set of predefined classes $\mathcal{C}$, $f(x) = c\in \mathcal{C}$.

\vspace{2pt}
{\bf Interpreter} -- In general, the DNN interpretability can be obtained in two ways: designing interpretable {\dnns}\mcite{Zhang:2018:cvpr,Sabour:2017:nips} or extracting post-hoc interpretations. The latter case does not require modifying model architectures or parameters, thereby leading to higher prediction accuracy. We thus mainly consider post-hoc interpretations in this paper. More specifically, we focus on instance-level interpretability\mcite{Karpathy:2016:iclr,Sabour:2017:nips,Murdoch:2017:iclr,Dabkowski:nips:2017,Shrikumar:2017:icml,zhang:2018:interpretable,Simonyan:gradsaliency,fong:mask,Murdoch:2018:iclr}, which explains how a \dnn $f$ classifies a given input $x$ and uncovers the causal relationship between $x$ and $f(x)$. We assume such interpretations are given in the form of {\em attribution maps}. As shown in Figure\mref{fig:imls}, the interpreter $g$ generates an attribution map $m = g(x; f)$, with its $i$-th element $m[i]$ quantifying the importance of $x$'s $i$-th feature $x[i]$ with respect to $f(x)$.

\vspace{2pt}
{\bf Adversarial Attack} -- \dnns are inherently vulnerable to adversarial inputs, which are maliciously crafted samples to force \dnns to misbehave\mcite{szegedy:iclr:2014,moosavi:cvpr:2017}. Typically, an adversarial input $\ax$ is generated by modifying a benign input $\bx$ via pixel perturbation (e.g., {\pgd}\mcite{madry:towards}) or spatial transformation (e.g., {\adef}\mcite{Xiao:2018:iclr}), with the objective of forcing $f$ to misclassify $\ax$ to a target class  $\ay$, $f(\ax) = \ay \neq f(\bx)$.
To ensure the attack evasiveness, the modification is often constrained to an allowed set (e.g., a norm ball $\mathcal{B}_\epsilon(\bx) = \{x | \|x - \bx \|_\infty \leq \epsilon\}$).
Consider {\pgd}, a universal first-order adversarial attack, as a concrete case.
At a high level, \pgd implements a sequence of project gradient descent on the loss function:
\begin{align}
	\label{eq:pgd}
\ssup{x}{(i+1)} = \Pi_{\mathcal{B}_\epsilon(\bx)} \left(\ssup{x}{(i)} - \alpha \sign\left(\nabla_x \ell_\mathrm{prd}\left(f\left(\ssup{x}{(i)}\right), \ay\right)\right)\right)
\end{align}
where $\Pi$ is the projection operator, $\alpha$ represents the learning rate, the loss function $\ell_\mathrm{prd}$ measures the difference of the model prediction $f(x)$ and the class $\ay$ targeted by the adversary (e.g., cross entropy), and $\ssup{x}{(0)}$ is initialized as $\bx$.


\vspace{2pt}
{\bf Threat Model} -- Following the line of work on adversarial attacks\mcite{szegedy:iclr:2014,goodfellow:fsgm,carlini:cw,madry:towards}, we assume in this paper a white-box setting: the adversary has complete access to the classifier $f$ and the interpreter $g$, including their architectures and parameters. This is a conservative and realistic assumption. Prior work has shown that it is possible to train a surrogate model $f'$ given black-box access to a target \dnn $f$\mcite{papernot2017practical}; given that the interpreter is often derived directly from the classifier (details in \msec{sec:attack}), the adversary may then train a substitution interpreter $g'$ based on $f'$. We consider investigating such black-box attacks as our ongoing work.

\section{ADV$^2$ Attack}
\label{sec:attack}



The interpretability of \imlses is believed to offer a sense of security by involving human in the decision process\mcite{Tao:nips:2018,Guo:2018:ccs,Gehr:2018:sp,Du:2018:kdd}; this belief has yet to be rigorously tested. We bridge this gap by presenting \system, a new class of attacks that deceive target \dnns and their interpreters simultaneously. Below we first give an overview of \system and then detail its instantiations against four major types of interpreters.

\subsection{Attack Formulation}


The \system attack deceives both the \dnn $f$ and its coupled interpreter $g$. Specifically, \system generates an adversarial input $\ax$ by modifying a benign input $\bx$ such that
\begin{mitemize}
\item (i) $\ax$ is misclassified by $f$ to a target class $\ay$, $f(\ax) = \ay$;
\item (ii) $\ax$ triggers $g$ to generate a target attribution map $\tm$, $g(\ax; f) = \tm$;
\item (iii) The difference between $\ax$ and $\bx$, $\Delta(\ax, \bx)$, is imperceptible; 
\end{mitemize}
where the distance function $\Delta$ depends on the concrete modification: for pixel perturbation (e.g.,\mcite{madry:towards}), it is  instantiated as $\mathcal{L}_\mathrm{p}$ norm, while for spatial transformation (e.g.,\mcite{Xiao:2018:iclr}), it is defined as the overall spatial distortion.

In other words, the goal is to find sufficiently small perturbation to the benign input that leads to the prediction and interpretation desired by the adversary.

%


At a high level, we formulate \system using the following optimization framework:
\begin{align}
	\label{eq:optatt}
\min_{x}\,\, \Delta(x, \bx)
\quad
	\mathrm{s.t.}\,\, \left\{
	\begin{array}{l}
		f(x)  = \ay\\
		g(x;f) = \tm
	\end{array}
	\right.
\end{align}
where the constraints ensure that (i) the adversarial input is misclassified as $\ay$ and (ii) it triggers $g$ to generate the target attribution map $\tm$.

%
%

%

As the constraints of $f(x) = \ay$ and $g(x;f) = \tm$ are highly non-linear for practical \dnns, we reformulate \meq{eq:optatt} in a form more suited for optimization:
\begin{align}
	\label{eq:optatt2}
\nonumber \min_{x} &  \quad \ell_\mathrm{prd}(f(x), \ay) +  \lambda \ell_\mathrm{int} \left(g(x; f), \tm\right)\\
	\mathrm{s.t.} &  \quad 
			\Delta (x, \bx) \leq \epsilon
\end{align}
where the prediction loss $\ell_\mathrm{prd}$ is the same as in \meq{eq:pgd}, the interpretation loss $\ell_\mathrm{int}$ measures the difference of adversarial map $g(x;f)$ and target map $\tm$, and the hyper-parameter $\lambda$ balances the two factors. Below we use $\ell_\mathrm{adv}(x)$ to denote the overall loss function defined in \meq{eq:optatt2}.

We construct the solver of \meq{eq:optatt2} upon an adversarial attack framework. While it is flexible to choose the concrete framework, below we primarily use {\pgd}\mcite{madry:towards} as the reference and discuss the construction of \system upon alternative frameworks (e.g., spatial transformation\mcite{Xiao:2018:iclr}) in \msec{sec:eval}.

Under this setting, we define
$\ell_\mathrm{prd}(f(x), \ay) = -\log(f_{c_t}(x))$ (i.e., the negative log likelihood of $x$ with respect to the class $c_t$),
$\Delta(x, \bx) = \|x - \bx\|_\infty$, and $\ell_\mathrm{int}(g(x; f), \tm) = \|g(x; f) - \tm \|_2^2$.
In general, \system searches for $\ax$ using a sequence of gradient descent updates:
\begin{equation}
	\label{eq:update}
\ssup{x}{(i+1)}   =   \Pi_{\mathcal{B}_\epsilon(\bx)} \left(\ssup{x}{(i)} - \alpha \sign \left(\nabla_x  \ell_\mathrm{adv}\left(\ssup{x}{(i)}\right)\right)\right)
\end{equation}



However, directly applying \meq{eq:update} is often found ineffective, due to the unique characteristics of individual interpreters. In the following, we detail the instantiations of \system against the back-propagation-, representation-, model-, and perturbation-guided interpreters, respectively.





\subsection{Back-Propagation-Guided Interpretation}

This class of interpreters compute the gradient (or its variants) of the model prediction with respect to a given input to derive the importance of each input feature. The hypothesis is that larger gradient magnitude indicates higher relevance of the feature to the prediction. We consider gradient saliency ({\grd})\mcite{Simonyan:gradsaliency} as a representative of this class.


Intuitively, \grd considers a linear approximation of the model prediction (probability) $f_c(x)$ for a given input $x$ and a given class $c$, and derives the attribution map $m$ as:
\begin{equation}
	\label{eq:gradient}
		m = \left| \frac{\partial f_c(x) }{\partial x} \right|
\end{equation}

To attack \grd-based \imlses, we may search for $\ax$ using a sequence of gradient descent updates as defined in \meq{eq:update}.
However, according to \meq{eq:gradient}, computing the gradient of the attribution map $g(x;f)$ amounts to computing the Hessian matrix of $f_c(x)$, which is all-zero for \dnns with ReLU activation functions. Thus the gradient of the interpretation loss $\ell_\mathrm{int}$ provides little information for updating $x$, which makes directly applying \meq{eq:update} ineffective.

\begin{figure}
		\centering
		\epsfig{file=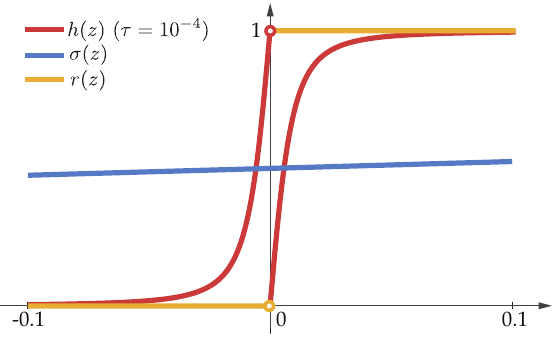, width=60mm}
		\caption{\small Comparison of $h(z)$, $\sigma(z)$, and $r(z)$ near $z = 0$.
		 \label{fig:gh}}
\end{figure}

 To overcome this, when performing back-propagation, we smooth the gradient of ReLU, denoted by $r(z)$, with a function $h(z)$ defined as ($\tau$ is a small constant, e.g., $\ssup{10}{-4}$):
\begin{equation*}
	h(z) \triangleq \begin{cases}
	(z + \sqrt{z^2 + \tau})' = 1 + z/\sqrt{z^2 + \tau} & (z < 0) \\
	(\sqrt{z^2 + \tau} )' = z/\sqrt{z^2 + \tau} & (z \geq 0)
	\end{cases}
\end{equation*}

Intuitively, $h(z)$ tightly approximates $r(z)$, while its gradient is non-zero everywhere.
Another possibility is the sigmoid function $\sigma(z) = 1/(1+e^{-z})$. Figure\mref{fig:gh} compares different functions near $z = 0$. Our evaluation shows that $h(z)$ significantly outperforms $\sigma(z)$ and $r(z)$ in attacking \grd.

This attack is extensible to other back-propagation-based interpreters (e.g., {\small{\sc DeepLIFT}}\mcite{Shrikumar:2017:icml}, {\small{\sc SmoothGrad}}\mcite{Smilkov:iclr:2017}, and {\small{\sc LRP}}\mcite{Bach:2015:plos}), due to their fundamentally equivalent, gradient-centric formulations\mcite{Ancona:iclr:2018}.

\subsection{Representation-Guided Interpretation}
\label{sec:cam}

This class of interpreters leverage the feature maps at intermediate layers of \dnns to generate attribution maps. We consider class activation mapping (\cam)\mcite{zhou:cam} as a representative interpreter of this class.


At a high level, \cam performs global average pooling\mcite{lin:nin} over the feature maps of the last convolutional layer, and uses the outputs as features for a linear layer with softmax activation to approximate the model predictions. Based on this connectivity structure, \cam computes the attribution maps by projecting the weights of the linear layer back to the convolutional feature maps.


Formally, let $a_k[i, j]$ denote the activation of the $k$-th channel of the last convolutional layer at the spatial position $(i, j)$. The output of global average pooling is defined as $A_k = \sum_{i, j} a_k[i, j]$. Further let $w_{k,c}$ be the weight of the connection between the $k$-th input and the $c$-th output of the linear layer. The input to the softmax function for a class $c$ with respect to a given input $x$ is approximated by:
\begin{equation}
z_c(x) \approx \sum_{k} w_{k,c}\, A_k = \sum_{i, j} \sum_k w_{k,c}\, a_k [i, j]
\end{equation}

The class activation map $m_c$ is then given by:
\begin{equation}
m_c[i,j] = \sum_k w_{k,c}\, a_k[i, j]
\end{equation}

Due to its use of deep representations at intermediate layers, \cam generates attribution maps of high visual quality and limited noise and artifacts\mcite{lin:nin}.

We instantiate $g$ with a \dnn that concatenates the part of $f$ up to its last convolutional layer and a linear layer parameterized by $\{w_{k,c}\}$. To attack \cam, we search for $\ax$ using a sequence of gradient descent updates as defined in \meq{eq:update}.
This attack can be readily extended to other representation-guided interpreters (e.g., {\gcam}\mcite{selvaraju:gradcam}), with details deferred to Appendix A1.

\subsection{Model-Guided Interpretation}

Instead of relying on deep representations at intermediate layers, model-guided methods train a meta-model to directly predict the attribution map for any given input in a single feed-forward pass.
 We consider {\rts}\mcite{Dabkowski:nips:2017} as a representative method in this category.

For a given input $x$ in a class $c$, \rts finds its attribution map $m$ by solving the following optimization problem:
\begin{equation}
\begin{array}{rl}
\min_m & \lambda_1 r_\mathrm{tv}(m) + \lambda_2 r_\mathrm{av}(m) - \log \left( f_c \left( \phi (x; m) \right) \right) \\
& + \lambda_3 f_c \left( \phi ( x; 1 - m ) \right)^{\lambda_4} \\
\mathrm{s.t.} & 0 \leq m \leq 1
\end{array}
\label{opt:rtsinit}
\end{equation}


Here $r_\mathrm{tv}(m)$ denotes the total variation of $m$, which reduces noise and artifacts in $m$; $r_\mathrm{av}(m)$ represents the average value of $m$, which minimizes the size of retained parts; $\phi(x;m)$ is the operator using $m$ as a mask to blend $x$ with random colors and Gaussian blur, which captures the impact of retained parts (where the mask is non-zero) on the model prediction; the hyper-parameters $\{\lambda_i\}_{i=1}^4$ balance these factors. Intuitively, this formulation finds the sufficient and necessary parts of $x$, based on which $f$ is able to make the prediction $f(x)$ with high confidence.

However, solving \meq{opt:rtsinit} for every input during inference is fairly expensive. Instead, \rts trains a \dnn to directly predict the attribution map for any given input, without accessing to the \dnn $f$ after training. In\mcite{Ronneberger:miccai:2015}, this is achieved by composing a {\rnet}\mcite{he:resnet} pre-trained on ImageNet\mcite{Deng:2009:cvpr} as the encoder (which extracts feature maps of given inputs at different scales) and a {\sc U-Net}\mcite{Ronneberger:miccai:2015} as the masking model, which is then trained to directly optimize \meq{opt:rtsinit}. We consider the composition of this encoder and this masking model as the interpreter $g$.

To attack \rts, one may directly apply \meq{eq:update}. However, our evaluation shows that this strategy is often ineffective for finding  desirable adversarial inputs. This is explained by that the encoder $\mathrm{enc}(\cdot)$ plays a significant role in generating attribution maps, while solely relying on the outputs of the masking model is insufficient to guide the attack. We thus add to \meq{eq:optatt2} an additional loss term $\ell_\mathrm{enc}(\mathrm{enc}(x), \mathrm{enc}(c_t))$, which measures the difference of the encoder's outputs for the adversarial input $x$ and the target class $\ay$.

We then search for the adversarial input $\ax$ with a sequence of gradient descent updates defined in \meq{eq:update}. More implementation details are discussed in \msec{sec:impl}.



\subsection{Perturbation-Guided Interpretation}

The fourth class of interpreters formulate finding the attribution map by perturbing the input with minimum noise and observing the change in the model prediction. We consider {\mask}\mcite{fong:mask} as a representative interpreter in this class.

For a given input $x$, \mask identifies its most informative parts by checking whether changing such parts influences the prediction $f(x)$. It learns a mask $m$, where $m[i] = 0$ if the $i$-th input feature is retained and $m[i] = 1$ if the feature is replaced with Gaussian noise. The optimal mask is found by solving an optimization problem:
\begin{align}
 \min_m  \,\,  f_c( \phi ( x; m ))  + \lambda \| 1 -m \|_1 \quad \mathrm{s.t.}  \,\,\,\, 0 \leq m \leq 1
\label{opt:maskopt4}
\end{align}
where $c$ denotes the current prediction $c = f(x)$ and $\phi(x; m)$ is the perturbation operator which blends $x$ with Gaussian noise. The first term finds $m$ that causes the probability of $c$ to decrease significantly, while the second term encourages $m$ to be sparse. Intuitively, solving \meq{opt:maskopt4} amounts to finding the most informative and necessary parts of $x$ with respect to its prediction $f(x)$. Note that this formulation may result in significant artifacts in $m$. A more refined formulation is given in Appendix A2.


%
%
%
%


Unlike other classes of interpreters, to attack \mask, it is infeasible to directly optimize \meq{eq:optatt2} with iterative gradient descent (\meq{eq:update}), because the interpreter $g$ itself is formulated as an optimization procedure.

Instead, we reformulate \system using a bilevel optimization framework. For given $\bx$, $\ay$, $\tm$, $f$, and $g$, we re-define the adversarial loss function as $\ell_\mathrm{adv}(x,m) \triangleq \ell_\mathrm{prd}(f(x), \ay) + \lambda \ell_\mathrm{int}(m, \tm)$ by introducing $m$ as an additional variable. Let $\ell_\mathrm{map}(m;x)$ be the objective function defined in \meq{opt:maskopt4} (or its variant \meq{opt:maskopt}). Note that $m_*(x) = \arg\min_m \ell_\mathrm{map}(m;x)$ is the attribution map found by \mask for a given input $x$. We then have the following attack framework:
\begin{align}
	\nonumber \min_x & \quad  \ell_\mathrm{adv}\left(x,\,m_*(x)\right) \\
	\mathrm{s.t.} &  \quad m_*(x) = \arg\min_m \,\ell_\mathrm{map}(m;x)
	\label{opt:maskopt2}
\end{align}

Still, solving the bilevel optimization in \meq{opt:maskopt2} exactly is challenging, as it requires recomputing $m_*(x)$ by solving the inner optimization problem whenever $x$ is updated. We propose an approximate iterative procedure which optimizes $x$ and $m$ by alternating between gradient descent on $\ell_\mathrm{adv}$ and $\ell_\mathrm{map}$ respectively.

More specifically, at the $i$-th iteration, given the current input $\ssup{x}{(i-1)}$, we compute its attribution map $m^{(i)}$ by updating $\ssup{m}{(i-1)}$ with gradient descent on $\ell_\mathrm{map}\left(\ssup{m}{(i-1)}; \ssup{x}{(i-1)}\right)$; we then fix $m^{(i)}$ and obtain $\ssup{x}{(i)}$ by minimizing $\ell_\mathrm{adv}$ after a single step of gradient descent with respect to $m^{(i)}$. Formally, we define the objective function for updating $x^{(i)}$ as:
\begin{displaymath}
\ell_\mathrm{adv}\left( \ssup{x}{(i-1)}, \,m^{(i)} - \xi \nabla_m \ell_\mathrm{map}\left(\ssup{m}{(i)}; \ssup{x}{(i-1)}\right)\right)
\end{displaymath}
where $\xi$ is the learning rate for this virtual gradient descent.

The rationale behind this procedure is as follows. While it is difficult to directly minimizing $\ell_\mathrm{adv}\left(x, m_*(x)\right)$ with respect to $x$, we use a single-step unrolled map as a surrogate of $m_*(x)$. A similar approach is used in\mcite{Finn:2017:icml}. Essentially, this iterative optimization defines a Stackelberg game\mcite{Scherer:stackelberg:1996} between the optimizer for $x$ (leader) and the optimizer for $m$ (follower), which requires the leader to anticipate the follower's next move to reach the equilibrium.

\begin{algorithm}[!ht]{\footnotesize
\KwIn{$\bx$: benign input; $\ay$: target class;  $\tm$: target map;  $f$: target DNN; $g$: MASK interpreter}
\KwOut{$\ax$: adversarial input}
initialize $x$ and $m$ as $\bx$ and $g(\bx;f)$\;
\While{not converged}{
\tcp{\footnotesize update $m$}
update $m$ by gradient descent along $\nabla_m \ell_\mathrm{map}(m;x)$\;
\tcp{\footnotesize update $x$ with single-step lookahead}
update $x$ by gradient descent along $\nabla_x \ell_\mathrm{adv}\left(x, \, m - \xi \nabla_m \ell_\mathrm{map}\left(m; x\right)\right)$\;
}
\Return $x$\;
\caption{\small \system against \mask. \label{alg:mask}}}
\end{algorithm}

Algorithm\mref{alg:mask} sketches the attack against \mask. More implementation details are given in \msec{sec:impl}. The theoretical justification for its effectiveness is deferred to Appendix A3.


%

\subsection{Implementation and Optimization}
\label{sec:impl}

Next we detail the implementation of \system and present a suite of optimizations to improve the attack effectiveness against specific interpreters.

\vspace{2pt}
{\bf Iterative Optimizer --} We build the optimizer based upon {\pgd}\mcite{madry:towards}, which iteratively updates the adversarial input using \meq{eq:update}. By default, we use $\mathcal{L}_\infty$ norm to measure the perturbation magnitude.
It is possible to adopt alternative frameworks if other perturbation metrics are considered. For instance, instead of modifying pixels directly, one may generate adversarial inputs via spatial transformation\mcite{Xiao:2018:iclr,Alaifari:2019:iclr}, in which the perturbation magnitude is often measured by the overall spatial distortion. We detail and evaluate spatial transformation-based \system in \msec{sec:eval}.

\vspace{2pt}
{\bf Warm Start --} It is observed in our evaluation that it is often inefficient to search for adversarial inputs by running the update steps of \system (\meq{eq:update}) from scratch. Rather, first running a fixed number (e.g., 400) of update steps of the regular adversarial attack and then resuming the \system update steps significantly improves the search efficiency. Intuitively, this strategy first quickly approaches the manifold of adversarial inputs, and then searches for inputs  satisfying both prediction and interpretation constraints.

\vspace{2pt}
{\bf Label Smoothing --} Recall that we measure the prediction loss $\ell_\mathrm{prd}(f(x),c_t)$ with cross entropy. When attacking \grd, \system may generate intermediate inputs that cause $f$ to make over-confident predictions (e.g., with probability 1). The all-zero gradient of $\ell_\mathrm{prd}$ prevents the attack from finding inputs with desirable interpretations. To solve this, we refine cross entropy with label smoothing\mcite{Szegedy:2016:cvpr}. We sample $y_{c_t}$ from a uniform distribution $\mathbb{U}(1 - \rho, 1)$ and define $y_{c} = \frac{1-y_{c_t}}{|\mathcal{C}|-1}$ for $c \neq c_t$ and $\ell_\mathrm{prd}(f(x), c_t) = - \sum_{c \in \mathcal{C}} y_c \log f_c(x) $.
During the attack, we gradually decrease $\rho$ from 0.05 to 0.01.

%

%


\vspace{2pt}
{\bf Multistep Lookahead --} In implementing Algorithm\mref{alg:mask}, we apply multiple steps of gradient descent in both updating $m$ (line 3) and computing the surrogate map $m_*(x)$ (line 4), which is observed to lead to faster convergence in our empirical evaluation. Further, to improve the optimization stability, we use the average gradient to update $m$. Specifically, let $\{\sboth{m}{j}{(i)}\}$ be the sequence of maps obtained at the $i$-th iteration by applying multistep gradient descent. We use the aggregated interpretation loss $\sum_j\| \sboth{m}{j}{(i)} - \tm \|_2^2$ to compute the gradient for updating $m$.

\vspace{2pt}
{\bf Adaptive Learning Rate --} To improve the convergence of Algorithm\mref{alg:mask}, we also dynamically adjust the learning rate for updating $m$ and $x$. At each iteration, we use a running Adam optimizer as a meta-learner\mcite{Andrychowicz:nips:2016} to estimate the optimal learning rate for updating $m$ (line 3). We update $x$ in a two-step fashion to stabilize the training: (i) first updating $x$ in terms of the prediction loss $\ell_\mathrm{prd}$, and (ii) updating it in terms of the interpretation loss $\ell_\mathrm{int}$.
During (ii), we use a
binary search to find the largest step size, such that $x$'s confidence is still above a certain threshold $\kappa$ after the perturbation. 

{\bf Periodical Reset --} Recall that in Algorithm\mref{alg:mask}, we update the estimate of the attribution map by following gradient descent on $\ell_\mathrm{map}$. As the number of update steps increases, this estimate may deviate significantly from the true map generated by the \mask interpreter, which negatively impacts the attack effectiveness. To address this, periodically (e.g., every 50 iterations), we replace the estimated map with the map $g(x;f)$ that is directly computed by \mask based on the current adversarial input. At the same time, we reset the Adam step parameter to correct its internal state.

\section{Attack Evaluation}
\label{sec:eval}

Next we conduct an empirical study of \system on a variety of \dnns and interpreters from both qualitative and quantitative perspectives. Specifically, our experiments are designed to answer the following key questions about \system:

$\mathrm{Q_1}$: Is it effective to deceive target classifiers?

$\mathrm{Q_2}$: Is it effective to mislead target interpreters?

$\mathrm{Q_3}$: Is it evasive with respect to attack detection methods?

$\mathrm{Q_4}$: Is it effective in real security-critical applications?

$\mathrm{Q_5}$: Is it flexible to adopt alternative attack frameworks?


%

\subsection*{Experimental Setting}
\label{sec:evalimpl}

We first introduce the setting of our empirical evaluation.

{\bf Datasets --} Our evaluation primarily uses ImageNet\mcite{Deng:2009:cvpr}, which consists of 1.2 million images from 1,000 classes. Every image is center-cropped to 224$\times$224 pixels. For a given classifier $f$, from the validation set of ImageNet, we randomly sample 1,000 images that are classified correctly by $f$ to form our test set. All the pixels are normalized to $[0, 1]$. 

{\bf Classifiers --} We use two state-of-the-art \dnns as the classifiers, {\rnet-50}\mcite{he:resnet} and {\dnet-169}\mcite{Huang:cvpr:2017}, which respectively attain 77.15\% and 77.92\% top-1 accuracy on ImageNet. Using two \dnns of distinct capacities (50 layers versus 169 layers) and architectures (residual blocks versus dense blocks), we factor out the influence of the characteristics of individual \dnns.

{\bf Interpreters --} We adopt {\grd}\mcite{Simonyan:gradsaliency}, {\cam}\mcite{zhou:cam}, {\rts}\mcite{Dabkowski:nips:2017}, and {\mask}\mcite{fong:mask} as the representatives of back-propagation-, representation-, model-, and perturbation-guided interpreters respectively. We adopt their open-source implementation in our evaluation.
 As \rts is tightly coupled with its target \dnn (i.e., \rnet), we train a new masking model for \dnet. To assess the validity of the implementation, we evaluate all the interpreters in a weakly semi-supervised localization task\mcite{Cao:2015:iccv} using the benchmark dataset and method in \mcite{Dabkowski:nips:2017}. Table\mref{tab:weaklyError} summarizes the results. The performance of all the interpreters is consistent with that reported in\mcite{Dabkowski:nips:2017}, with slight variation due to the difference of underlying \dnns.

 %

 \begin{table}[!ht]{\footnotesize
	\centering
	\begin{tabular}{c|c|c|c|c}
	Interpreter	& \grd & \cam& \mask & \rts \\
		\hline
		\hline

	Measure & $43.1$ &$43.8$ & $45.2$&  $34.2$\\

	\end{tabular}
	\caption{\small Performance of the interpreters in this paper in a weakly semi-supervised localization task (with \rnet as the classifier). \label{tab:weaklyError}}}
\end{table}


{\bf Attacks --} We implement all the variants of \system in \msec{sec:attack} on the \pgd framework. In addition, we also implement \system on a spatial transformation framework (\adef)\mcite{Alaifari:2019:iclr}. We compare \system with regular {\pgd}\mcite{madry:towards}, a universal first-order adversarial attack. For both \system and \pgd, we assume the setting of targeted attacks, in which the adversary attempts to force the \dnns to misclassify the adversarial inputs into randomly designated classes. The parameter settings of all the attacks are summarized in Appendix B.

\subsection*{Q1. Attack Effectiveness (Prediction)}

We first evaluate the effectiveness of \system in terms of deceiving target \dnns. The effectiveness is measured using {\em attack success rate}, which is defined as
\begin{displaymath}
\mathrm{Attack\,\,Success\,\,Rate}\,(\mathrm{ASR}) = \frac{\mathrm{\#\, successful\,\,trials}}{\mathrm{\#\, total\,\,trials}}
\end{displaymath}
and {\em misclassification confidence} (MC), which is the probability assigned by the \dnn to the target class $c_t$.

\begin{table}[!ht]{\footnotesize
	\centering
  \setlength\tabcolsep{3pt}
	\begin{tabular}{c|cccc|cccc}
		& \multicolumn{4}{c|}{\rnet} &\multicolumn{4}{c}{\dnet}\\ \cline{2-9}
		& {\grd} & {\cam} & {\mask} & {\rts} & {\grd} & {\cam} & {\mask} & {\rts} \\
		\hline
		\hline
		\multirow{1}{*}{\texttt{P}}    &          \multicolumn{4}{c|}{100\% (1.0)}         &       \multicolumn{4}{c}{100\% (1.0)}      \\

		\rowcolor{Gray}
		& 100\%  &  100\% & 98\% & 100\%  & 100\% & 100\% & 96\%  & 100\% \\
		\rowcolor{Gray}
		\multirow{-2}{*}{\texttt{A}}  & (0.99) & (1.0) & (0.99) & (1.0) & (0.98) & (1.0) & (0.98)  & (1.0)\\
		\hline
	\end{tabular}
	\caption{\small Effectiveness of \pgd (P) and \system (A) against different classifiers and interpreters in terms of ASR (MC).
		\label{tab:attacksucc}}
    }
\end{table}

Table\mref{tab:attacksucc} summarizes the attack success rate and misclassification confidence of \system and \pgd against different combinations of classifiers and interpreters. Note that as \pgd is only applied on the classifier, its effectiveness is agnostic to the interpreters. To make fair comparison, we fix the maximum number of iterations as 1,000 for both attacks. It is observed that \system achieves high success rate (above 95\%) and misclassification confidence (above 0.98) across all the cases, which is comparable with the regular \pgd attack. We thus have the following conclusion.
\begin{tcolorbox}[boxrule=0pt, title= Observation 1]
Despite its dual objectives, \system is as effective as regular adversarial attacks in deceiving target \dnns.
\end{tcolorbox}

%

\subsection*{Q2. Attack Effectiveness (Interpretation)}

Next we evaluate the effectiveness of \system in terms of generating similar interpretations to benign inputs. Specifically, we compare the interpretations of benign and adversarial inputs, which is crucial for understanding the security implications of using interpretability as a means of defenses\mcite{Du:2018:kdd,Tao:nips:2018}. Due to the lack of standard metrics for interpretation plausibility, we use a variety of measures in our evaluation.

\begin{figure}[!ht]
	\hspace{-10pt}
	\epsfig{file = 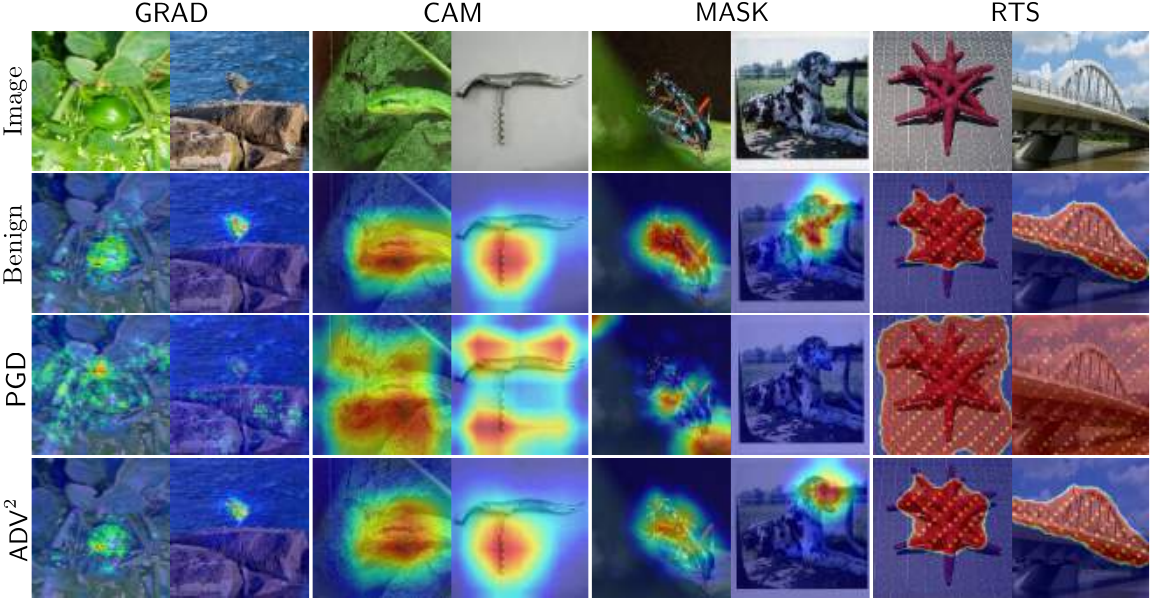, width=90mm}
	\caption{\small Attribution maps of benign and adversarial (\pgd, \system) inputs with respect to \grd, \cam, \mask, and \rts on \rnet. 	\label{fig:ensemble-sample}}
\end{figure}
{\bf Visualization --} We first qualitatively compare the interpretations of benign and adversarial (\pgd, \system) inputs. 
Figure\mref{fig:ensemble-sample} show a set of sample inputs and their attribution maps with respect to \grd, \cam, \mask, and \rts (more samples in Appendix C1). Observe that in all the cases, the \system inputs generate interpretations perceptually indistinguishable from their benign counterparts. In comparison, the \pgd inputs are easily identifiable by inspecting their attribution maps.

\vspace{2pt}
{\bf $\bm{L_p}$ Measure --}
Besides qualitatively comparing the attribution maps of benign and adversarial inputs, we also measure their similarity quantitatively. By considering attribution maps as matrices, we measure the $\mathcal{L}_1$ distance between benign and adversarial maps. Figure\mref{tab:lpdistance} summarizes the results (other $\mathcal{L}_\mathrm{p}$ measures in Appendix C1). For comparison, we normalize all the measures to $[0, 1]$ by dividing them by the number of pixels.

\begin{figure}[!ht]
	\epsfig{file = 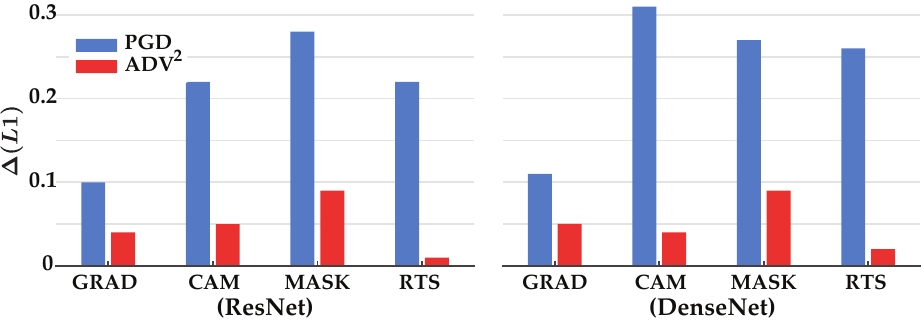, width=85mm}
	\caption{\small Average $\mathcal{L}_1$ distance between benign and adversarial (\pgd, \system) attribution maps. \label{tab:lpdistance}
  }
\end{figure}

We have the following observations. (i) Compared with \pgd, \system generates attribution maps much more similar to benign cases. The average $\mathcal{L}_1$ measure of \system is more than 60\% lower than \pgd across all the interpreters. (ii) The effectiveness of \system varies with the target interpreter. For instance, compared with other interpreters, the difference between \pgd and \system is relatively marginal on \grd, implying that different interpreters may inherently feature varying robustness against \system. (iii) The effectiveness of \system seems insensitive to the underlying \dnn. On both \rnet and \dnet, it achieves similar $\mathcal{L}_1$ measures.

\vspace{2pt}
{\bf IoU Test --}
Another quantitative measure for the similarity of attribution maps is the intersection-over-union (\iou) score. It is widely used in object detection\mcite{He:iccv:2017} to compare model predictions with ground-truth bounding boxes. Formally, the \iou score of a binary-valued map $m$ with respect to a baseline map $\bbm$ is defined as their Jaccard similarity:
$\mathrm{IoU}(m) = |O(m) \cap O(\bbm)|/|O(m) \cup O(\bbm)|$,
where $O(m)$ denotes the set of non-zero dimensions in $m$. In our case, as the values of attribution maps are floating numbers, we first apply threshold binarization on the maps.

\begin{figure}[!ht]
	\epsfig{file = 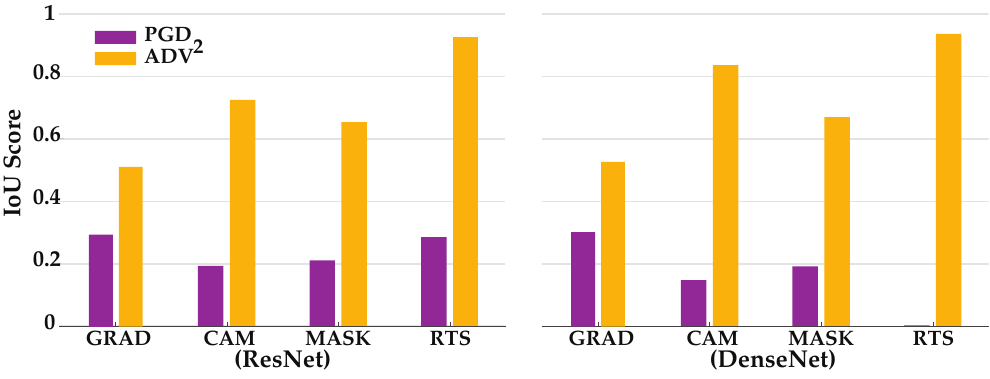, width=85mm}
	\caption{\small IoU scores of adversarial attribution maps (\pgd, \system) with respect to benign maps.  \label{fig:ioutest}
  }
\end{figure}

Following a typical rule used in the object detection task\mcite{He:iccv:2017} where a detected region of interest (RoI) is considered positive if its \iou score is above 0.5 with respect to a ground-truth mask, we thus consider an attribution map as plausible if its \iou score exceeds 0.5 with respect to the benign attribution map.
Figure\mref{fig:ioutest} compares the average \iou scores of adversarial maps (\pgd, \system) with respect to the benign cases. Observe that \system achieves \iou scores above $0.5$ across all the interpreters, which are more than $40\%$ higher than \pgd in all the cases. Especially on \rts, in which the attribution maps are natively binary-valued, \system achieves \iou scores above 0.9 on both \rnet and \dnet.

Based on both qualitative and quantitative measures, we have the following conclusion.
\begin{tcolorbox}[boxrule=0pt, title= Observation 2]
\system is able to generate adversarial inputs with interpretations highly similar to benign cases.
\end{tcolorbox}

\subsection*{Q3. Attack Evasiveness}

Intuitively, from the adversary's perspective, \system entails a search space for adversarial inputs no larger than its underlying adversarial attack (e.g., \pgd), as \system needs to optimize both the prediction loss $\ell_\mathrm{prd}$ and interpretation loss $\ell_\mathrm{int}$, while \system only needs to optimize $\ell_\mathrm{prd}$. Next we compare \pgd and \system in terms of their evasiveness with respect to adversarial attack detection methods.




\vspace{2pt}
{\bf Basic ADV$^2$ --}
To be succinct, we consider feature squeezing (\fs)\mcite{Xu:2018:ndss} as a concrete detection method. \fs reduces the adversary's search space by coalescing inputs corresponding to different feature vectors into a single input, and detects adversarial inputs by comparing their predictions under original and squeezed settings. This operation is implemented in the form of a set of ``squeezers'': bit depth reduction, local smoothing, and non-local smoothing.

\begin{table}[!ht]{\footnotesize
	\centering
    \setlength\tabcolsep{3pt}
	\begin{tabular}{c|c|c|cc|cc}
		Squeezer  & Setting  & PGD & MASK-\texttt{A} & RTS-\texttt{A} & \cellcolor{Gray} MASK-\texttt{A}$^*$ & \cellcolor{Gray} RTS-\texttt{A}$^*$  \\
		\hline
		\hline
		Bit Depth 	& 2-bit        &92.3\%   &84.1\% &94.0\%   &\cellcolor{Gray} 11.7\% \cellcolor{Gray} &\cellcolor{Gray} 29.4\% \\
	Reduction		& 3-bit        &72.7\%   &89.2\% &88.3\%   &\cellcolor{Gray} 35.9\%  \cellcolor{Gray} &\cellcolor{Gray} 13.9\% \\
		\hline
		L. Smoothing
		& 3$\times$3   &97.3\%   &98.6\% &99.0\%  &\cellcolor{Gray} 16.5\%&\cellcolor{Gray} 3.4\% \\
		\hline
	N. Smoothing
		& 11-3-4& 52.3\%    &74.7\% &75.3\% & \cellcolor{Gray} 51.7\% &\cellcolor{Gray} 29.4\% \\
		\hline

	\end{tabular}
	\caption{\small Detectability of adversarial inputs by \pgd, basic \system (A), and adaptive \system (A$^*$) using feature squeezing. 	\label{tab:squeezing}}}
\end{table}

Table\mref{tab:squeezing} lists the detection rate of adversarial inputs (\pgd, \system) using different types of squeezers on \rnet. Observe that the squeezers seem effective to detect both \system and \pgd inputs. For instance, local smoothing achieves higher than 97\% success rate in detecting both \system and \pgd inputs, with difference less than 2\%. We thus have:

\begin{tcolorbox}[boxrule=0pt, title= Observation 3]
The overall detectability of \system and \pgd with respect to feature squeezing is not significantly different.
\end{tcolorbox}

{\bf Adaptive ADV$^2$ --}
We now adapt \system to evade the detection of \fs. Related to existing adaptive attacks against {\fs}\mcite{He:usenixworkshop:2017}, this optimization is interesting in its own right. Specifically, for smoothing squeezers, we augment the loss function $\ell_\mathrm{adv}(x)$ (\meq{eq:optatt2}) with the term $\ell_\mathrm{sqz}(f(x), f(\psi(x))$, which is the cross entropy of the predictions of original and squeezed inputs ($\psi$ is the squeezer).

\begin{algorithm}{\small
\KwIn{$\bx$: benign input; $\ay$: target class; $f$: target DNN; $g$: target interpreter; $\psi$: bit depth reduction; $i$: bit depth}
\KwOut{$\ax$: adversarial input}
\tcp{\footnotesize augmented $\ell_\mathrm{adv}$ with $\ell_\mathrm{sqz}$ $\mathtt{w.r.t.}$ smoothing}
\tcp{\footnotesize attack in squeezed space}
$\ssub{x}{+} \leftarrow$ PGD on $\psi(\bx)$ with target $\ay$ and $\alpha = 1/2^i$\;
\tcp{\footnotesize attack in original space}
search for $\ax = \arg\min_{x \in \mathcal{B}_\epsilon(\bx)} \ell_\mathrm{adv}(x) + \lambda \| f(x) - f(\ssub{x}{+}) \|_1$\;
\Return $\ax$\;
\caption{\small Adaptive \system against Feature Squeezing.\label{alg:adaptive}}}
\end{algorithm}

For bit depth reduction, we use a two-stage strategy. (i) We first search in the squeezed space for an adversarial input $\ssub{x}{+}$ that is close to $\bx$'s $\epsilon$-neighborhood. To do so, we run \pgd over $\psi(\bx)$ with learning rate $\alpha = 1/2^i$ ($i$ is the bit depth). (ii) We then search in $\bx$'s $\epsilon$-neighborhood for an adversarial input $\ax$ that is classified similarly as $\ssub{x}{+}$. To do so, we augment the loss function $\ell_\mathrm{adv}(x)$ with a probability loss term
$\|f(x) - f(\ssub{x}{+}) \|_1$ ($f(\ssub{x}{+})$ is $\ssub{x}{+}$'s probability vector), and then apply \pgd to search for $\ax$ within $\bx$'s $\epsilon$-neighborhood. The overall algorithm is sketched in Algorithm\mref{alg:adaptive}.

\begin{table}[!ht]{\footnotesize
	\centering
 \setlength\tabcolsep{5pt}
	\begin{tabular}{c|ccc|ccc}
  	\multirow{2}{*}{Metric}    &  \multicolumn{3}{c|}{MASK} &   \multicolumn{3}{c}{RTS}\\
    \cline{2-7}
 & \texttt{P} & \texttt{A} &\cellcolor{Gray}  \texttt{A}$^*$ & \texttt{P} & \texttt{A} &\cellcolor{Gray}  \texttt{A}$^*$ \\
		\hline
		\hline
$\Delta \mathcal{L}_1$ &0.28 & 0.09 &\cellcolor{Gray}  0.09 & 0.22 & 0.01 & \cellcolor{Gray} 0.02 \\
\hline
IoU & 0.21& 0.65 &\cellcolor{Gray}  0.61 & 0.29 & 0.93 &\cellcolor{Gray}  0.94\\
	\end{tabular}
	\caption{\small $\mathcal{L}_1$ measures and \iou scores of adversarial attribution maps (\pgd, basic and adaptive \system) with respect to benign maps.\label{tab:evasiveness}}}
\end{table}

Table\mref{tab:squeezing} summarizes the detection rate of adversarial inputs generated by adaptive \system, which drops significantly, compared with the case of basic \system. Note that here we only show the possibility of adapting \system to evade a representative detection method, and consider an in-depth study on this matter as our ongoing work.
Meanwhile, we compare the $\mathcal{L}_1$ measures and \iou scores of the attribution maps generated by basic and adaptive \system (with respect to the benign maps). Table\mref{tab:evasiveness} shows the results. Observe that the optimization in adaptive \system has little impact on its attack effectiveness against the interpreters. We may thus conclude:
\begin{tcolorbox}[boxrule=0pt, title= Observation 4]
It is possible to adapt \system to generate adversarial inputs evasive with respect to feature squeezing.
\end{tcolorbox}

\begin{figure}[!ht]
\hspace{-10pt}
\epsfig{file = 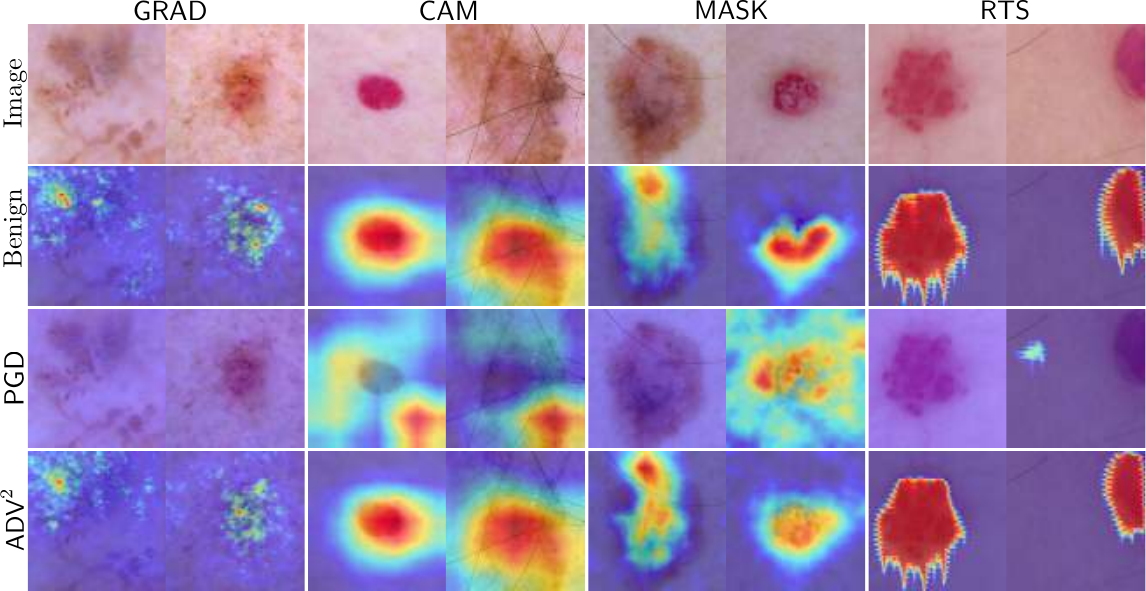, width=88mm}
\caption{\small Attribution maps of benign and adversarial (\system) inputs in the skin cancer screening application.
	\label{fig:skincancer}}
\end{figure}

\subsection*{Q4. Real Application}

We now evaluate the effectiveness of \system in real security-critical applications.
We use the skin cancer screening task from the ISIC 2018 challenge\mcite{gessert:2018:arxiv} as a case study, in which given skin lesion images are categorized into a seven-disease taxonomy. We adopt a competition-winning model\footnote{\url{https://github.com/ngessert/isic2018}} (with \rnet as its backbone) as the classifier,
which attains 82.27\% weighted multi-class accuracy on the holdout set (more details in Appendix C2).

We apply \system on this classifier and measure its effectiveness of generating plausible interpretations. Figure\mref{fig:skincancer} shows a set of samples and their attribution maps on the four interpreters. Observe that \system generates interpretations visually indiscernible from their benign counterparts in all the cases.

\begin{figure}[!ht]
	\centering
	\epsfig{file = 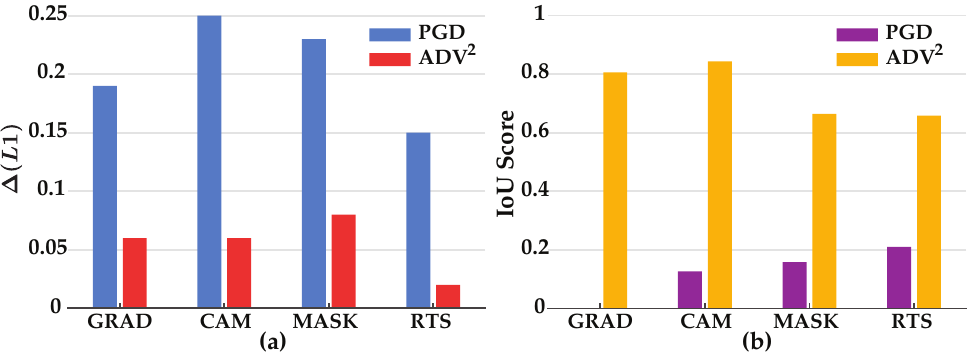, width=85mm}
	\caption{\small $\mathcal{L}_1$ measures (a) and IoU scores (b) of adversarial attribution maps (\pgd, \system) with respect to benign maps. \label{tab:iouSkinCancer}}
\end{figure}

This similarity is further quantitatively validated in Figure\mref{tab:iouSkinCancer}, which shows the $\mathcal{L}_1$ measures (other $\mathcal{L}_\mathrm{p}$ measures in Appendix C2)  and \iou scores of the maps generated by \system with respect to the benign maps. For instance, the \iou scores of \system exceed $0.62$ across all the interpreters.

%



\subsection*{Q5. Alternative Attack Framework}

Besides the \pgd framework, \system can also be flexibly built upon alternative frameworks.
Here we construct \system upon {\adef}\mcite{Xiao:2018:iclr}, a spatial transformation-based adversarial attack. 
The implementation details are given in Appendix A4.

\begin{figure}[!ht]
\hspace{-10pt}
		\epsfig{file = 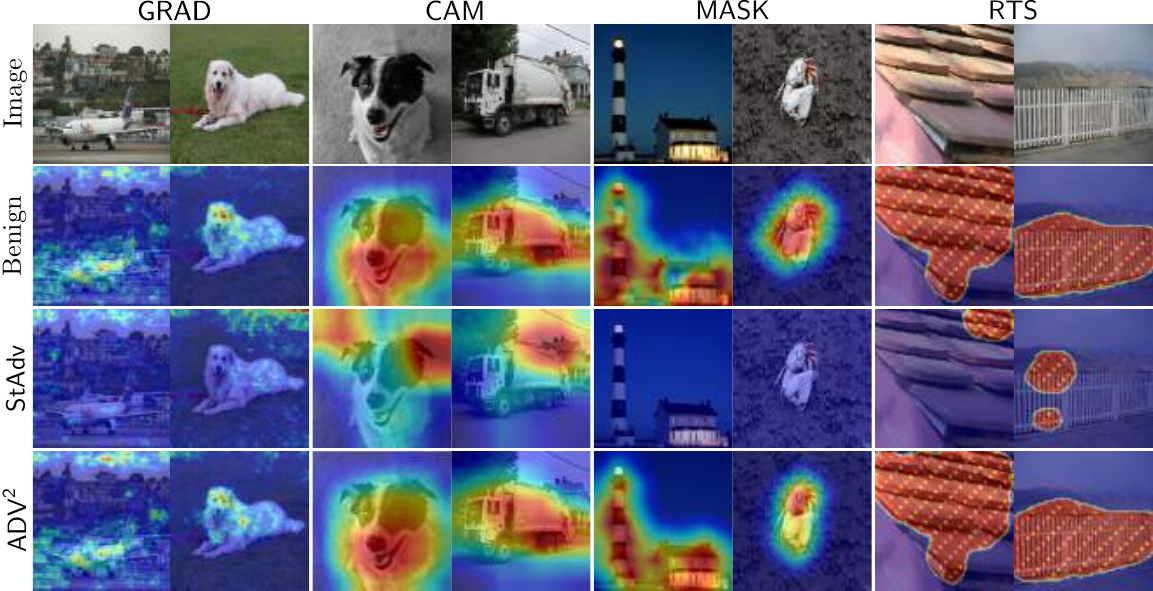, width=88mm}
		\caption{\small Attribution maps of benign and adversarial (\adef, \adef-based \system) inputs with respect to \grd, \cam, \mask, and \rts on \rnet.
			\label{fig:stimg}}
	\end{figure}

Figure\mref{fig:stimg} illustrates sample benign and adversarial (\adef, \system) inputs and their interpretations. Compared with \adef, \system generates adversarial inputs with maps much more similar to the benign cases, which indicates the effectiveness of \system constructed upon the \adef framework.

This observation is also confirmed by the $\mathcal{L}_1$ measures and \iou scores of adversarial attribution maps, which are shown in Figure\mref{fig:l1-regular} (more results in Appendix C3). Interestingly, compared with the other interpreters, \mask seems more resilient to \adef-based \system. The comparison with the results of \pgd-based \system (Figure\mref{tab:lpdistance} and \mref{fig:ioutest}) implies that the (relative) robustness of different interpreters may vary with the concrete attacks (details in \msec{sec:tradeoff}).

\begin{figure}[!ht]
  \epsfig{file = 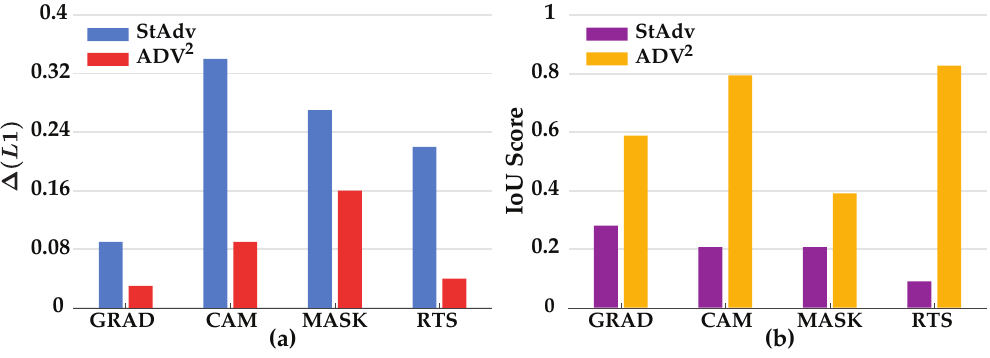, width=85mm}
  \caption{\small $\mathcal{L}_1$ measures (a) and \iou scores (b) of adversarial attribution maps (\adef, \adef-based \system) with respect to benign maps on \rnet. \label{fig:l1-regular} }
\end{figure}

Overall we have the following conclusion.
\begin{tcolorbox}[boxrule=0pt, title= Observation 5]
As a general class of attacks, \system can be flexibly built upon alternative adversarial attack frameworks.
\end{tcolorbox}




\section{Discussion}
\label{sec:tradeoff}

While it is shown in \msec{sec:eval} that \system is effective against a range of classifiers and interpreters, the cause of this effectiveness is unclear yet. Next we conduct a study on this root cause from both analytical and empirical perspectives. Based on our findings, we further discuss potential countermeasures against \system.

%
%
%
%


\subsection*{Q1. Root of Attack Vulnerability}

Recall that the formulation of \system in \meq{eq:optatt2} defines two seemingly conflicting objectives: (i) maximizing the prediction change while (ii) minimizing the interpretation change. We thus conjecture that the effectiveness of \system may stem from the partial independence between a classifier and its interpreter -- the interpreter's explanations only partially describe the classifier's predictions, making it practical to exploit both models simultaneously.

To validate the existence of this prediction-interpretation gap, we consider \system targeting randomly generated predictions and interpretations. For a given input $\bx$, we randomly generate a target class $\ay$ and a target interpretation $m_t$, and search for an adversarial input $\ax$ that triggers the classifier to misclassify it as $\ay$ and also generates an interpretation similar to $m_t$ (i.e., $f(\ax) = \ay$ and $g(\ax;f) \approx \tm$). Intuitively, if \system is able to find such $\ax$, it indicates that the classifier and its interpreter can be manipulated separately; in other words, they are only partially aligned with each other.

\vspace{2pt}
{\bf Random Patch Interpretation --} In the first case,
 for a given input, we define its target attribution map by (i) sampling a patch of random shape (either a rectangle or a circle), random angle, and random position over the input, and (ii) setting the elements inside the patch as `1' and that outside it as `0'. Typically this target map deviates significantly from its benign counterpart, due to its randomness.
 %

 \begin{table}[!ht]{\footnotesize
 \centering
 \begin{tabular}{c|cccc}
  & \grd &	\cam	& \mask & \rts \\
 	 \hline
 		 \hline

\multirow{2}{*}{\system} & 100\% & 100\% & 99\% & 100\%  \\
 & (0.98) &  (1.0)    & (0.95) & (1.0)\\
 \end{tabular}
 \caption{\small {\small ASR} ({\small MC}) of \system targeting random patch interpretations. \label{tab:target}}}
\end{table}

We evaluate the effectiveness of \system under this setting. Table\mref{tab:target} summarizes the attack success rate of \system on \rnet. Observe that compared with Table\mref{tab:attacksucc},
targeting random patch interpretations has little impact on the attack effectiveness in terms of deceiving the classifiers, implying that the space of adversarial inputs is sufficiently large to contain ones with targeted interpretations.


\begin{figure}[!ht]
\hspace{-10pt}
	\epsfig{file = 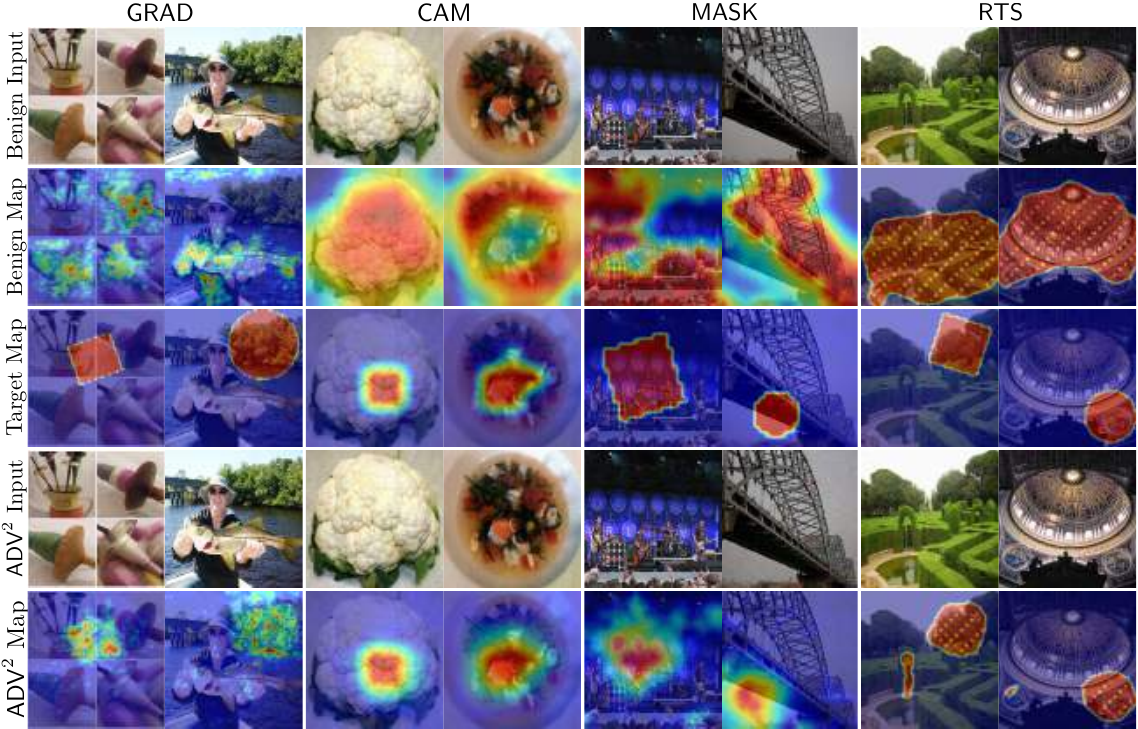, width=88mm}
	\caption{\small Visualization of \system targeting random patch interpretations across different interpreters on \rnet. 	\label{fig:target}}
\end{figure}

We then evaluate the effectiveness of \system in terms of generating the target interpretations. For a given benign input $\bx$ and a target random patch map $\tm$, \system attempts to generate an adversarial input $\ay$ with the interpretation similar to $\tm$. Figure\mref{fig:target} visualizes a set of sample results. Note that in all the cases the \system maps appear visually similar to the target maps, highlighting the attack effectiveness. This effectiveness is further validated in Table\mref{tab:targetdist}. Observe that across all the interpreters, an \system map is much more similar to its target map, compared with its benign counterpart.


\begin{table}[!ht]{\footnotesize
	\centering
	\begin{tabular}{c|cccc}

	& \grd & \cam & \mask & \rts\\
		\hline
		\hline
		 $\Delta_\mathrm{b}$$\mathcal{L}_1$ & $0.16$ & $0.50 $ & $0.42 $  & $0.49$ \\
		\cellcolor{Gray} $\Delta_\mathrm{t}$$\mathcal{L}_1$	& \cellcolor{Gray}$0.10$ & \cellcolor{Gray}$0.04$ &  \cellcolor{Gray}$0.15 $ & \cellcolor{Gray}$0.07$  \\
	\end{tabular}
	\caption{\small Comparison of \system and target maps ($\Delta_\mathrm{t}$) and that of \system and benign maps ($\Delta_\mathrm{b}$), measured by $\mathcal{L}_1$ distance. \label{tab:targetdist}}}
\end{table}

%

{\bf Random Class Interpretation --} In the second case, for a given input (with $\ay$ as the target class), we instantiate its target interpretation with the attribution map of a benign input randomly sampled from another class $\tilde{c}_t$. We particularly enforce $\ay \neq \tilde{c}_t$; in other words, the adversarial input is misclassified into one class but interpreted as another one.

\begin{table}[!ht]{\footnotesize
\centering
\begin{tabular}{c|cccc}
 & \grd &	\cam	& \mask & \rts \\
	\hline
		\hline

\multirow{2}{*}{\system} & 100\% & 100\% & 100\% & 100\%  \\
& (0.99) &  (0.99)    & (0.99) & (1.0)\\
\end{tabular}
\caption{\small {\small ASR} ({\small MC}) of \system with random class interpretations. \label{tab:succ-target-map}}}
\end{table}

The {\small ASR} of \system is summarized in Table\mref{tab:succ-target-map}. Observe that targeting random class interpretations has little influence on the attack effectiveness of deceiving the classifiers. Figure\mref{fig:target-map-res} visualizes a set of sample target and \system inputs and their interpretations (\dnet results in Appendix C4). Note that the target and \system inputs are fairly distinct, but with highly similar interpretations. This is quantitatively validated by their $\mathcal{L}_1$ measures and \iou scores listed in Figure\mref{fig:l1-target}.

\begin{figure}[!ht]
\hspace{-10pt}
	\epsfig{file = 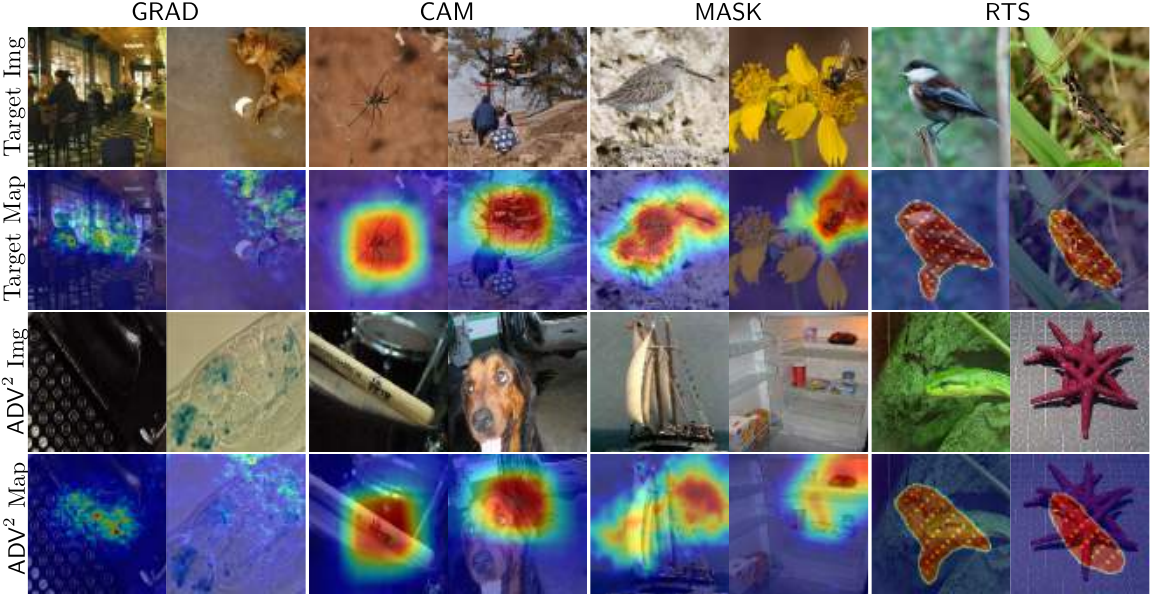, width=88mm}
	\caption{\small Target and adversarial (\system) inputs and their attribution maps on \rnet.
\label{fig:target-map-res}}
\end{figure}

\begin{figure}[!ht]
	\epsfig{file = 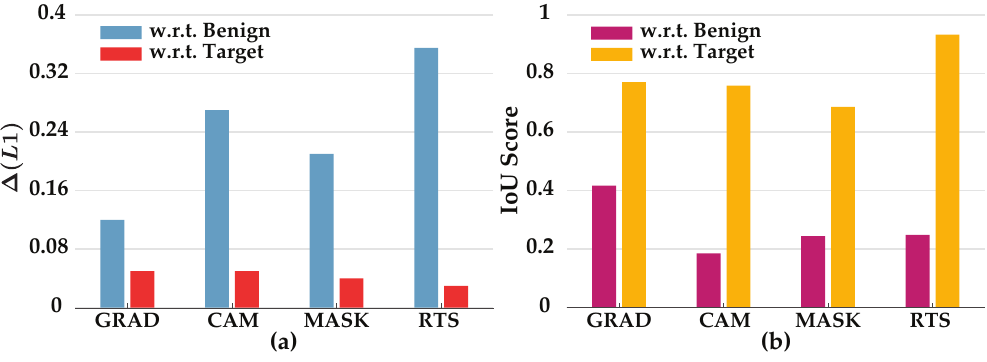, width=85mm}
	\caption{\small $\mathcal{L}_1$ measures (a) and IoU scores (b) of adversarial maps with respect to benign and target cases on \rnet.
		\label{fig:l1-target}}
\end{figure}



The experiments above show that it is practical to generate adversarial inputs targeting arbitrary predictions and interpretations. We can therefore conclude:
\begin{tcolorbox}[boxrule=0pt, title= Observation 6]
A \dnn and its interpreter are often not fully aligned, allowing the adversary to exploit both models simultaneously.
\end{tcolorbox}

\subsection*{Q2. Root of Prediction-Interpretation Gap}

Next we  explore the fundamental causes of this prediction-interpretation gap. We speculate one following possible explanation as: existing interpretation models do not comprehensively capture the dynamics of \dnns, each only describing one aspect of their behavior.

Specifically, \grd solely relies on the gradient information; \mask focuses on the input-prediction correspondence while ignoring the internal representations; \cam leverages the deep representations at intermediate layers, but neglecting the input-prediction correspondence; \rts uses the internal representations in an auxiliary encoder and the input-interpretation correspondence in the training data, which however may deviate from the true behavior of \dnns.

Intuitively the exclusive focus on one aspect (e.g., input-prediction correspondence) of the \dnn behavior results in loose constraints: when performing the attack, the adversary only needs to ensure that benign and adversarial inputs cause \dnns to behave similarly from one specific perspective. We validate this speculation from two observations, low attack transferability and disparate attack robustness.

\vspace{2pt}
{\bf Attack Transferability --} One intriguing property of adversarial inputs is their transferability: an adversarial input effective against one \dnn is often found effective against another \dnn, though it is not crafted on the second one\mcite{papernot2016transferability,Liu:2017:iclr,moosavi:cvpr:2017}. In this set of experiments, we investigate whether such transferability exists in attacks against interpreters; that is, whether an adversarial input that generates a plausible interpretation against one interpreter is also able to generate a probable interpretation against another interpreter.

\begin{figure}[!ht]
 \hspace{-10pt}
 \epsfig{file = 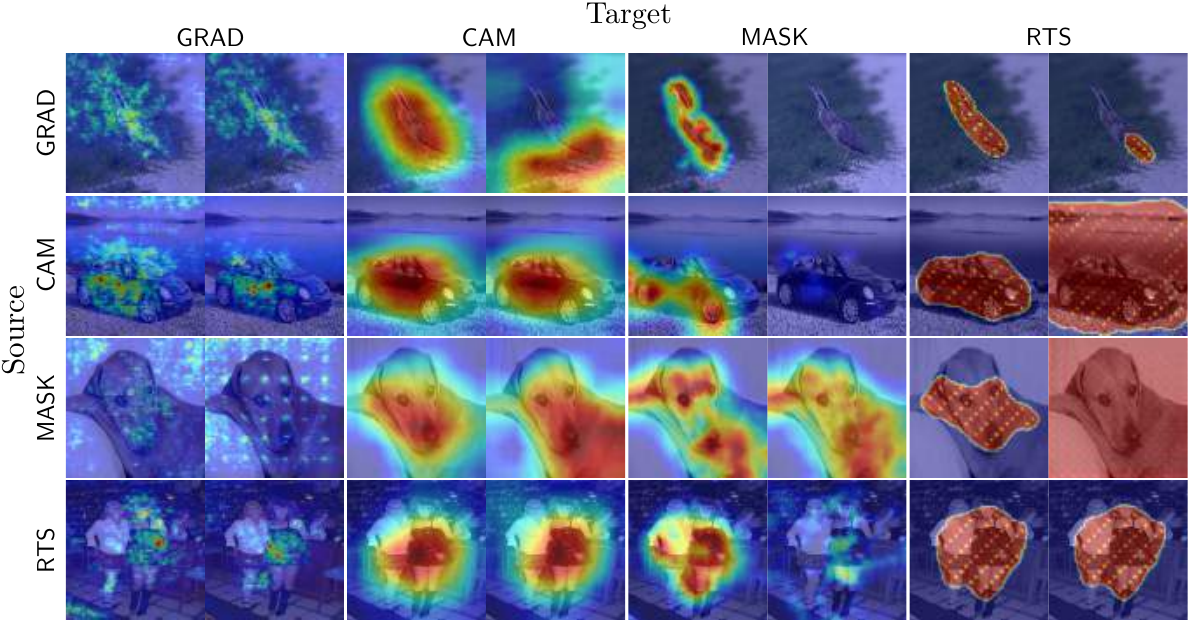, width=88mm}
 \caption{\small Visualization of attribution maps of adversarial inputs across different interpreters on \rnet. 	\label{fig:transfer}}
\end{figure}

Specifically, for each given interpreter $g$, we randomly select a set of adversarial inputs crafted against $g$ (source) and compute their interpretations on another interpreter $g'$ (target). Figure\mref{fig:transfer} illustrates the attribution maps of a given adversarial input on $g$ and $g'$. Further, for each case, we compare the adversarial map (right) against the corresponding benign map (left). Observe that the interpretation transferability is fairly low: an adversarial input crafted against one interpreter $g$ rarely generates highly plausible interpretation on another interpreter $g'$.

\begin{table}[!ht]{\footnotesize
 \centering
 \begin{tabular}{c|c|c|c|c}
	 & \grd & \cam & \mask & \rts\\
	 \hline
	 \hline
	 \grd &\cellcolor{Gray}$0.04$ & $0.24$ & $0.22$ & $0.24$\\
	 \cam    &$0.09$ & \cellcolor{Gray}$0.05$    &  $0.18$  &  $ 0.13$ \\
	 \mask   &$0.12$  & $0.34$  &    \cellcolor{Gray}$0.09$    & $ 0.74$      \\
	 \rts    &$0.10$ &  $0.17$   &  $0.20$                &  \cellcolor{Gray}$0.01$ \\

	 \hline
	 \pgd & $0.10$ & $0.22$     & $0.28$   & $0.22$ \\
 \end{tabular}
 \caption{\small $\mathcal{L}_1$ distance between attribution maps of adversarial (\system, \pgd) on \rnet (row/column as source/target). 	\label{tab:transferability}} }
\end{table}

We further quantitatively validate this observation. Table\mref{tab:transferability} measures the $\mathcal{L}_1$ distance between the adversarial and benign attribution maps across different interpreters. For comparison, it also shows the $\mathcal{L}_1$ measure for the adversarial inputs generated by \pgd. Observe that the adversarial inputs crafted on $g$ tends to generate low-quality interpretations on a different interpreter $g'$, with quality comparable to that generated by an interpretation-agnostic attack (i.e., \pgd). We can therefore conclude:

\begin{tcolorbox}[boxrule=0pt, title= Observation 7]
The transferability of adversarial inputs across different interpreters seems low.
\end{tcolorbox}

%
%
%

\vspace{2pt}
{\bf Attack Robustness --} It is observed in \msec{sec:eval} that the effectiveness of \system varies with the target interpreter. As shown in Figure\mref{fig:ioutest}, among all the interpreters, \system attains the lowest \iou scores on \grd, suggesting that \grd may be more robust against \system. This observation may be explained as follows: \grd uses the gradient magnitude of each input feature to measure its relevance to the model prediction; meanwhile, \system heavily uses the gradient information to optimize the prediction loss $\ell_\mathrm{prd}$; it is inherently difficult to minimize $\ell_\mathrm{prd}$ while keeping the gradient intact.

We validate the conjecture by analyzing the robustness of integrated gradient (\ig)\mcite{Sundararajan:2017:icml}, another back-propagation-guided interpreter, against \system. Due to their fundamental equivalence\mcite{Ancona:iclr:2018}, the discussion here also generalizes to other back-propagation interpreters (e.g.,\mcite{Simonyan:gradsaliency,Smilkov:iclr:2017,Shrikumar:2017:icml}).

%
%

At a high level, for the $i$-th feature of a given input $x$, \ig computes its attribution $m[i]$ by aggregating the gradient of $f(x)$ along the path from a baseline input $\bar{x}$ to $x$:
\begin{equation}
m[i] = (x[i] - \bar{x}[i]) \int_{0}^1 \frac{\partial f(tx + (1-t)\bar{x} )}{\partial x[i]}\mathrm{d}t
\end{equation}

Like other back-propagation interpretation models\mcite{Ancona:iclr:2018}, \ig satisfies the desirable completeness axiom\mcite{Shrikumar:2017:icml} that the attributions sum up to the difference between $f$'s predictions for the given input $x$ and the baseline $\bar{x}$.

%
%
%


To simplify the exposition, let us assume a binary classification setting with classes $\mathcal{C} = \{+, -\}$. The \dnn $f$ predicts the probability of $x$ belonging to the positive class as $f(x)$. Given an input $\bx$ from the negative class, the adversary attempts to craft an adversarial input $\ax$ to force $f$ to misclassify $\ax$ as positive. We define the prediction loss as $\ell_\mathrm{prd}(\ax) = f(\ax) - f(\bx)$ (i.e., the increase in the probability of positive prediction), which can be computed as:
\begin{equation}
	\label{eq:int}
\ell_\mathrm{prd}(\ax) = \int_{0}^1 \nabla f(t \ax + (1-t) \bx)^\matt (\ax - \bx)\mathrm{d}t
\end{equation}

Meanwhile, we define the interpretation loss as $\ell_\mathrm{int}(\ax) = \|\bbm - \am \|_1$, where $\bbm$ and $\am$ are the attribution maps of $\bx$ and $\ax$ respectively. While it is difficult to directly quantify $\ell_\mathrm{int}(\ax)$, we may use the attribution map of $\ax$ with $\bx$ as a surrogate baseline:
\begin{equation}
\Delta m[i]  =  (\ax[i] - \bx[i]) \int_{0}^1 \frac{\partial f(t\ax  + (1-t)\bx )}{\partial \ax[i]}\mathrm{d}t
\end{equation}
which quantifies the impact of the $i$-th input feature on the difference of $f(\bx)$ and $f(\ax)$. Thus, $\ell_\mathrm{int}(\ax) = \|\Delta m \|_1$.

\begin{prop}
With \ig, the prediction loss is upper bounded by the interpretation loss as:
 $\ell_\mathrm{prd}(\ax) \leq \ell_\mathrm{int}(\ax)$.
\end{prop}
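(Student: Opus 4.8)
The plan is to reduce the proposition to the \emph{completeness axiom} of \ig that the excerpt already invokes, namely that an attribution map sums to the gap between $f$'s prediction at the input and at the baseline. First I would sum the surrogate attribution $\Delta m[i]$ over all coordinates. Since the sum is finite, it commutes with the integral, so
\begin{equation*}
\sum_i \Delta m[i] = \int_0^1 \sum_i (\ax[i] - \bx[i])\,\frac{\partial f(t\ax + (1-t)\bx)}{\partial \ax[i]}\,\mathrm{d}t = \int_0^1 \nabla f(t\ax + (1-t)\bx)^\matt (\ax - \bx)\,\mathrm{d}t,
\end{equation*}
and the right-hand side is exactly $\ell_\mathrm{prd}(\ax)$ by \meq{eq:int}. (If I preferred not to cite the axiom as a black box, this identity is just the fundamental theorem of calculus applied to $t \mapsto f(t\ax + (1-t)\bx)$ along the straight-line path from $\bx$ to $\ax$, whose derivative is $\nabla f(t\ax + (1-t)\bx)^\matt (\ax - \bx)$.) Thus the \emph{signed} sum of the attributions equals the prediction loss.

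The remaining step is an elementary triangle inequality: having established $\ell_\mathrm{prd}(\ax) = \sum_i \Delta m[i]$, I would bound
\begin{equation*}
\ell_\mathrm{prd}(\ax) = \sum_i \Delta m[i] \leq \sum_i |\Delta m[i]| = \|\Delta m\|_1 = \ell_\mathrm{int}(\ax),
\end{equation*}
which is precisely the claimed inequality.

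I do not expect a genuine obstacle: once the completeness identity is spotted, the proposition collapses to two lines, and the only routine checks are the (immediate) interchange of a finite sum with the integral and the bound $\sum_i \Delta m[i] \le \sum_i |\Delta m[i]|$. The conceptual weight sits in that last inequality being the \emph{only} source of slack: equality holds exactly when every $\Delta m[i]$ shares the sign of the prediction change, so an adversary cannot shift $f$'s prediction by a large amount without forcing an at-least-as-large change in the \ig attribution map. This is the intended takeaway of the statement — it gives an analytical explanation for why gradient/\ig-style interpreters are comparatively robust against \system.
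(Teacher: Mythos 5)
Your proof is correct and is essentially the paper's own argument: the paper writes $\ell_\mathrm{prd}(\ax)=u^{\matt}v$ and $\ell_\mathrm{int}(\ax)=\|u\odot v\|_1$ with $u=\ax-\bx$ and $v$ the integrated-gradient vector, then applies the same bound $\sum_i u[i]v[i]\le\sum_i |u[i]v[i]|$ that you obtain via $\sum_i \Delta m[i]\le\sum_i|\Delta m[i]|$. Your completeness-axiom framing and the interchange of the finite sum with the integral are just a restatement of the paper's identification of $\ell_\mathrm{prd}$ with the signed sum of attributions, so the two proofs coincide.
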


\begin{proof}
We define $u$ as the input difference $u = (\ax - \bx)$ and $v$ as the integral vector with its $i$-th element $v[i]$ defined as
\begin{displaymath}
v[i] = \int_0^1\frac{\partial f(t\ax  + (1-t)\bx )}{\partial \ax[i]}\mathrm{d}t
\end{displaymath}
According to the definitions, we have
$\ell_\mathrm{prd}(\ax) = u^\matt v$ and
$\ell_\mathrm{int}(\ax) =  \| u \odot v \|_1$, where $\odot$ is the Hadamard product.

We have the following derivation:
$\ell_\mathrm{prd}(\ax)  = \sum_i u[i]v[i]  \leq \sum_i \| u[i]\cdot v[i] \| = \ell_\mathrm{int}(\ax)$.
Thus the prediction loss is upper-bounded by the interpretation loss.
%
\end{proof}
In other words, in order to force $\ax$ to be misclassified with high confidence, the difference of benign and adversarial attribution maps needs to be large. As the objectives of \system here is to maximize the prediction loss while minimizing the interpretation loss. The coupling between prediction and interpretation losses results in a fundamental conflict.

Note that however this conflict does not preclude effective adversarial attacks. First, the constraint of prediction and interpretation losses may be loose. Let $\gamma_\mathrm{prd}$ and $\gamma_\mathrm{int}$ be the thresholds of effective attacks. That is, for an effective attack, $\ell_\mathrm{prd}(\ax) \geq \gamma_\mathrm{prd}$ and $\ell_\mathrm{int}(\ax) \leq \gamma_\mathrm{int}$. There could be cases that $\gamma_\mathrm{prd} \ll \gamma_\mathrm{int}$, making \system still highly effective (e.g., Figure\mref{tab:iouSkinCancer}). Second, the adversary may pursue attacks that rely less on the gradient information to circumvent this conflict.


\vspace{2pt}
Overall, with the evidence of low attack transferability and disparate attack robustness, we can
conclude:
\begin{tcolorbox}[boxrule=0pt, title= Observation 8]
Existing interpreters tend to focus on distinct aspects of \dnn behavior, which may result in the prediction-interpretation gap.
\end{tcolorbox}

\subsection*{Q3. Potential Countermeasures}
\label{sec:counter}

Based on our findings, next we discuss potential countermeasures against \system attacks.

\vspace{2pt}
{\bf Defense 1: Ensemble Interpretation --} Motivated by the observation that different interpreters focus on distinct aspects of \dnn behavior (e.g., \cam focuses on deep representations while \mask focuses on input-prediction correspondence), a promising direction to defend against \system is to deploy multiple, complementary interpreters to provide a holistic view of \dnn behavior.

Yet, two major challenges remain to be addressed. First, different interpreters may provide disparate interpretations (e.g., Figure\mref{fig:transfer}). It is challenging to optimally aggregate such interpretations to detect \system. Second, the adversary may adapt \system to the ensemble interpreter (e.g. optimizing the interpretation loss with respect to all the interpreters). It is crucial to account for such adaptiveness in designing the ensemble interpreter. We consider developing the ensemble defenses and exploring the adversary's adaptive strategies as our ongoing research directions.

%

%

\vspace{2pt}
{\bf Defense 2: Adversarial Interpretation --} Along the second direction, we explore the idea of adversarial training. Recall that \system exploit the prediction-interpretation gap to generate adversarial inputs. Here we employ \system as a drive to minimize this gap during training interpreters.

Specifically, we propose an {\em adversarial interpretation distillation} (\aid) framework. Let $\mathcal{A}$ be the \system attack. During training an interpreter $g$, for a given input $\bx$, besides the regular loss $\ell_\mathrm{map}(\bx)$, we consider an additional loss term $\ell_\mathrm{aid}(\bx) = - \|g(\bx) - g(\mathcal{A}(\bx)) \|_1$, which is the negative $\mathcal{L}_1$ measure between the attribution maps of $\bx$ and its adversarial counterpart $\mathcal{A}(\bx)$.
We encourage $g$ to minimize this loss during the training (details in Appendix A5).

To assess the effectiveness of \aid to reduce the prediction-interpretation gap, we use \rts as a concrete case study. Recall that \rts is a model-guided interpreter which directly predicts the interpretation of a given input. We construct two variants of \rts, a regular one and another with \aid training (denoted by \rtsa). We measure the sensitivity of the two interpreters to the underlying \dnn behavior.

\begin{figure}[!ht]
\centering
      	\epsfig{file = 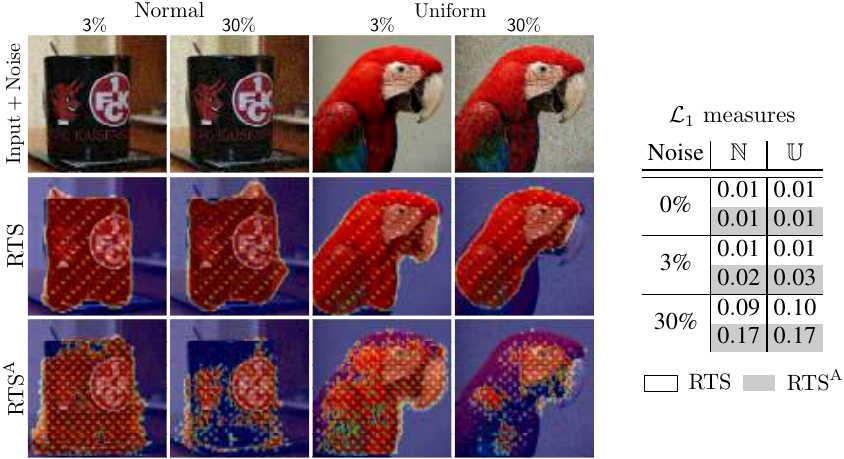, width=80mm}
    \caption{\small Attribution maps generated by \rts and \rtsa under different noise levels and types (normal $\mathbb{N}$, unifrom $\mathbb{U}$) on \rnet.
      \label{fig:aid-vis}}
\end{figure}

In the first case, we inject random noise (either normal or uniform) to the inputs and compare the attribution maps generated by the two interpreters. We consider two noise levels, which respectively cause 3\% and 30\% misclassification on the test set. Figure\mref{fig:aid-vis} shows a set of misclassified samples under the two noise levels.
Observe that compared  with \rts, \rtsa appears much more sensitive to the \dnn's behavior change, by generating highly contrastive maps. This sensitivity is also quantitatively confirmed by the $\mathcal{L}_1$ measures between clean and noisy maps on \rts and \rtsa.
The findings also corroborate a similar phenomenon observed in\mcite{Tsipras:2019:iclr}: the representations generated by robust models tend to align better with salient data characteristics.


\begin{figure}[!ht]
\centering
		\epsfig{file = 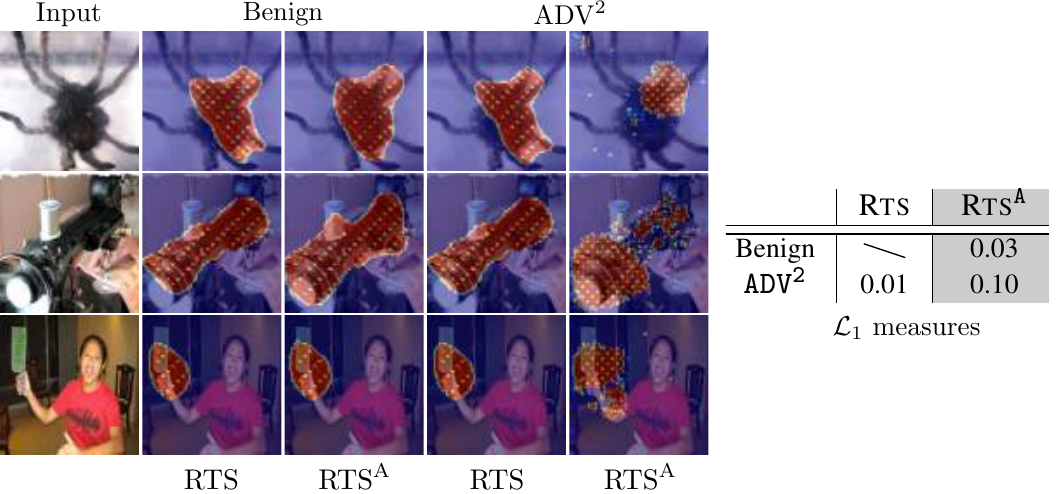, width=88mm}
		\caption{\small Attribution maps of benign and adversarial (\system) inputs with respect to \rts and \rtsa on \rnet.  \label{fig:aidv3}}
\end{figure}

In the second case, we assess the resilience of \rtsa against \system. In Figure\mref{fig:aidv3}, we compare the attribution maps of benign and adversarial inputs on \rts and \rtsa. It is observed that while \system generates adversarial inputs with interpretations fairly similar to benign cases on \rts, it fails to do so on \rtsa: the maps of adversarial inputs are fairly distinguishable from their benign counterparts. Moreover, \rtsa behaves almost identically to \rts on benign inputs, indicating that the \aid training has little impact on benign cases. These findings are confirmed by the $\mathcal{L}_1$ measures as well.

%
%
%

Overall we have the following conclusion.

\begin{tcolorbox}[boxrule=0pt, title= Observation 9]
It is possible to exploit \system to reduce the prediction-interpretation gap during training interpreters.
\end{tcolorbox}

\section{Related Work}
\label{sec:literature}


In this section, we survey three categories of work relevant to this work, namely, adversarial attacks and defenses, transferability, and interpretability.

\vspace{2pt}
{\bf Attacks and Defenses --}
Due to their widespread use in security-critical domains, machine learning models are increasingly becoming the targets of malicious attacks. Two primary threat models are considered in literature. Poisoning attacks -- the adversary pollutes the training data to eventually compromise the target models\mcite{Biggio:2012:icml}; Evasion attacks -- the adversary manipulates the input data during inference to trigger target models to misbehave\mcite{Dalvi:2004:kdd}.




Compared with simple models (e.g., support vector machines), securing deep neural networks (\dnns)  in adversarial settings entails more challenges due to their significantly higher model complexity\mcite{nat:dl}. One line of work focuses on developing new evasion attacks against {\dnns}\mcite{szegedy:iclr:2014,goodfellow:fsgm,madry:towards}. Another line of work attempts to improve \dnn resilience against such attacks by inventing new training and inference strategies\mcite{papernot:sp:2016,Xu:2018:ndss,Ma:2018:iclr}. Yet, such defenses are often circumvented by more powerful attacks\mcite{carlini:cw} or adaptively engineered adversarial inputs\mcite{Athalye:2018:iclr}, resulting in a constant arms race between attackers and defenders\mcite{Ling:2019:sp}.

This work is among the first to explore attacks against \dnns with interpretability as a means of defense.

\vspace{2pt}
{\bf Transferability --} One intriguing property of adversarial attacks is their transferability\mcite{szegedy:iclr:2014}: adversarial inputs crafted against one \dnn is often  effective against another one. This property enables black-box attacks -- the adversary generates adversarial inputs based on a surrogate \dnn and then applies them on the target model\mcite{papernot2016transferability,Liu:2017:iclr}. To defend against such attacks, the method of ensemble adversarial training\mcite{Tramer:2018:iclr} has been proposed, which trains \dnns using data augmented with adversarial inputs crafted on other models.

This work complements this line of work by investigating the transferability of adversarial inputs across different interpretation models.

\vspace{2pt}
{\bf Interpretability --} A plethora of interpretation models have been proposed to provide interpretability for black-box \dnns, using techniques based on back-propagation\mcite{Simonyan:gradsaliency,guidedbp:iclr:2015,Sundararajan:2017:icml}, intermediate representations\mcite{zhou:cam,selvaraju:gradcam,Du:arxiv:2018}, input perturbation\mcite{fong:mask}, and meta models\mcite{Dabkowski:nips:2017}.

The improved interpretability is believed to offer a sense of security by involving human in the decision-making process. Existing work has exploited interpretability to debug {\dnns}\mcite{Nguyen:2014:cvpr}, digest security analysis results\mcite{Guo:2018:ccs}, and detect adversarial inputs\mcite{Liu:kdd:2018,Tao:nips:2018}. Intuitively, as adversarial inputs cause unexpected \dnn behaviors, the interpretation of \dnn dynamics is expected to differ significantly between benign and adversarial inputs.

However, recent work empirically shows that some interpretation models seem insensitive to either \dnns or data generation processes\mcite{Adebayo:nips:2018}, while transformation with no effect on \dnns (e.g., constant shift) may significantly affect the behaviors of interpretation models\mcite{Kindermans:arxiv:2017}.

This work shows the possibility of deceiving \dnns and their coupled interpretation models simultaneously, implying that the improved interpretability only provides limited security assurance, which also complements prior work by examining the reliability of existing interpretation models from the perspective of adversarial vulnerability.

\section{Conclusion}
\label{sec:conclusion}

This work represents a systematic study on the security of interpretable deep learning systems (\imlses). We present \system, a general class of attacks that generate adversarial inputs not only misleading target \dnns but also deceiving their coupled interpretation models. Through extensive empirical evaluation, we show the effectiveness of \system against a range of \dnns and interpretation models, implying that the interpretability of existing \imlses may merely offer a false sense of security. We identify the prediction-interpretation gap as one possible cause of this vulnerability, raising the critical concern about the current assessment metrics of interpretation models. Further, we discuss potential countermeasures against \system, which sheds light on designing and operating \imlses in a more robust and informative fashion.

\section*{Acknowledgments}

This material is based upon work supported by the National Science Foundation under Grant No. 1846151 and 1910546. Any opinions, findings, and conclusions or recommendations expressed in this material are those of the author(s) and do not necessarily reflect the views of the National Science Foundation. Shouling Ji is partially supported by NSFC under No. 61772466 and U1836202, the Zhejiang Provincial Natural Science Foundation for Distinguished Young Scholars under No. LR19F020003, and the Provincial Key Research and Development Program of Zhejiang, China under No. 2017C01055. Xiapu Luo is partially supported by Hong Kong RGC Project (No. PolyU 152279/16E, CityU C1008-16G).

\bibliographystyle{plain}
\bibliography{main}

\newpage
\section*{Appendix}
\label{sec:appendix}


\subsection*{A. Implementation Details}

\subsubsection*{A1. ADV$\bm{^2}$ against Grad-CAM}

Gradient-weighted class activation mapping (\gcam)\mcite{selvaraju:gradcam} is another feature-guided interpretation model. Similar to \cam, it approximates the model predictions as a weighted summation of the feature maps of the last convolutional layer. Differently, it projects global averaged gradients back to the convolutional feature maps:
\begin{equation}
	w_{k,c} = \frac{1}{Z} \sum_{i,j} \frac{ \partial f_c(x) }{ \partial a_k[i, j]  }
\end{equation}
where $f_c(x)$ is the model prediction with respect to input $x$ and class $c$, $a_k[i,j]$ is the activation of the $k$-th channel of the last convolutional layer at the spatial position $(i,j)$, and $Z$ is a normalization constant.
The attribution map is defined as:
	\begin{equation}
		m_c[i, j] = \mathrm{ReLU} \left( \sum_k w_{k,c} \, a_k[i, j] \right)
	\end{equation}

To attack \gcam, we apply the same optimization procedure in \msec{sec:cam}.
Note that although the gradient of $w_{k,y}(x)$ with respect to $x$ is zero almost everywhere, it is feasible to find high-quality solutions using stochastic gradient descent methods, since $a_k$ has non-zero gradients with respect to $x$.

\begin{table}[!ht]{\footnotesize
	\centering
	\begin{tabular}{c|c|c|c}
		             Attack  & ASR & $\Delta$ $(\mathcal{L}_1)$ & $\Delta$ $(\mathcal{L}_2)$ \\
    \hline
		\hline
	\texttt{P}    &  100.0\% (1.0)    &   $0.22$  & $ 0.28$   \\

\rowcolor{Gray}
		 \texttt{A}    &  100.0\% (0.98) & $0.07$  & $ 0.09$ \\
  \hline
	\end{tabular}
	\caption{\small Effectiveness of \pgd and \system against \gcam. \label{tab:grad-cam}}}
\end{table}

%

Table\mref{tab:grad-cam} summarizes the attack success rate, misclassification confidence, $\mathcal{L}_1$ and $\mathcal{L}_2$ distance between benign and adversarial maps (by \pgd and \system).

\subsubsection*{A2. Optimized MASK Formulation}

The complete optimization objective for \mask in \mcite{fong:mask} is given as follows:
\begin{align}
\nonumber \min_m & \quad \lambda_1 r_\mathrm{tv}(m) + \lambda_2 \| 1 -m \|_1
+ \mathbb{E}_\tau \left[ f_c \left( \phi ( x ( \cdot - \tau), m ) \right) \right] \\
\mathrm{s.t.} &  \quad 0 \leq m \leq 1
\label{opt:maskopt}
\end{align}
Here the term $r_\mathrm{tv}(m)$ is the total variation of $m$, which reduces its noise and artifacts; the term $\|1-m\|_1$ encourages the sparsity of $m$; $\phi(x;m)$ is the perturbation operator which blends $x$ with Gaussian noise (controlled by the parameter $\tau$); while $\lambda_1$ and $\lambda_2$ are the regularized coefficients for total variation and sparsity respectively.

\subsubsection*{\bf A3. Analysis of ADV$\bm{^2}$ against MASK}


Here we provide simple analysis for the effectiveness of \system in Algorithm\mref{alg:mask}.
We have the following proposition.

\begin{prop}
\label{lemma:simple}
 Let $f(x,y): \mathbb{R}^m \times \mathbb{R}^n \to \mathbb{R}$ be a function with continuous second-order derivative in the neighborhood $\mathcal{N} = \mathcal{N}_x
 \times \mathcal{N}_y$ of a point $(x_0, y_0)$. Assume the following conditions hold:
 \begin{mitemize}
 	\item $C_1$. For each $x \in \mathcal{N}_x$, there is a unique $g(x) \triangleq y \in \mathcal{N}_y$ such that $y$ is a local minimizer of $f(x, \cdot)$;
 	\item $C_2$. $y_0$ is a local minimizer of $f(x_0, \cdot)$ (i.e., $g(x_0) = y_0$);
 	\item $C_3$. The Hessian of $f(x_0, \cdot), H(x_0, \cdot) = \nabla_y^2 f(x_0, y_0)$ is non-degenerate at $y_0$ (i.e., $\det(H(x_0, y_0)) > 0$).
\end{mitemize}
Then for every $g_0 \in \mathcal{N}_y$, the gradient of $G(x) = \frac{1}{2} \| g(x) - g_0 \|_2^2$ at $x = x_0$ is given by \begin{equation}
	\label{eq:thmgrad}
	-\nabla_{xy} f( x_0, y_0 )(\nabla_y^2 f( x_0, y_0 ))^{-1} ( g(x_0) - g_0 ).
\end{equation}
\end{prop}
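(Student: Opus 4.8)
The plan is to treat $g$ as an implicitly defined function and differentiate the first-order optimality condition. Conditions $C_1$ and $C_2$ guarantee that for every $x \in \mathcal{N}_x$ the minimizer $g(x)$ is a stationary point of $f(x, \cdot)$, so the identity $\nabla_y f(x, g(x)) = \vzero$ holds throughout $\mathcal{N}_x$. Condition $C_3$ — non-degeneracy of the Hessian $\nabla_y^2 f(x_0, y_0)$ — is precisely the invertibility hypothesis required to apply the Implicit Function Theorem to the equation $\nabla_y f(x, y) = \vzero$ at $(x_0, y_0)$. Thus the first step is to invoke the IFT to conclude that $g$ is continuously differentiable in a neighborhood of $x_0$, which legitimizes the existence of the Jacobian $J_g(x) \triangleq \nabla_x g(x)$ (an $n \times m$ matrix).

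Next I would differentiate the stationarity identity $\nabla_y f(x, g(x)) = \vzero$ with respect to $x$ using the chain rule. This yields $\nabla_{yx} f(x, g(x)) + \nabla_y^2 f(x, g(x))\, J_g(x) = 0$; evaluating at $x = x_0$, where $g(x_0) = y_0$ by $C_2$, and solving the linear system (invertibility supplied by $C_3$) gives
\begin{equation*}
J_g(x_0) = -\left(\nabla_y^2 f(x_0, y_0)\right)^{-1} \nabla_{yx} f(x_0, y_0).
\end{equation*}

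Finally I would apply the chain rule to $G(x) = \tfrac{1}{2}\|g(x) - g_0\|_2^2$, obtaining $\nabla_x G(x) = J_g(x)^\matt\,(g(x) - g_0)$. Substituting the expression for $J_g(x_0)$ and transposing, then using the symmetry of the Hessian (so its inverse is symmetric) together with $(\nabla_{yx} f)^\matt = \nabla_{xy} f$, produces exactly \meq{eq:thmgrad}.

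The calculation itself is routine once differentiability is secured, so the only genuine obstacle is careful bookkeeping of conventions: one must fix whether $\nabla_{xy} f$ denotes the $m \times n$ or $n \times m$ block of mixed second partials and track the transposes consistently through the two chain-rule steps, so that the final product correctly lands in $\mathbb{R}^m$ as a gradient in $x$. The appeal to the IFT should be made explicit, since it is the single place where all three hypotheses $C_1$–$C_3$ are simultaneously used — $C_1$ to make $g$ well defined as a function, $C_2$ to locate the solution branch at $(x_0, y_0)$, and $C_3$ to guarantee both differentiability and the invertibility of $\nabla_y^2 f(x_0, y_0)$ appearing in the formula.
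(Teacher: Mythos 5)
Your proposal is correct and takes essentially the same route as the paper's proof: both apply the implicit function theorem to the first-order optimality condition $\nabla_y f(x,y) = \vzero$ (with $C_1$--$C_2$ identifying $g$ with the implicit solution branch and $C_3$ supplying invertibility), differentiate that identity implicitly to get the Jacobian of $g$ at $x_0$, and finish with the chain rule on $G$. If anything, your handling of the transposes and the symmetry of the Hessian is more explicit than the paper's, which states the implicit-differentiation formula without tracking these conventions.
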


\begin{proof}
	 Let $Q$ be the partial derivative of $f$ with respect to $y$: $Q(x, y) \triangleq \nabla_y f(x, y)$. Then \meq{eq:thmgrad} is equivalent to
	 \begin{equation}
	 \label{eq:thmgradq}
	 	-\nabla_{x} Q( x_0, y_0 )(\nabla_y Q( x_0, y_0 ))^{-1} ( g(x_0) - g_0 )
	 \end{equation}

	 Since $g(x_0) = y_0$ (by $C_2$), $Q(x_0, y_0) = 0$. Based on $C_3$, for $J \triangleq \nabla_y Q(x_0, y_0)$, $\det(J) \neq 0 $.
	 According to the \textit{implicit function} theorem, there exists a neighborhood of $x_0$, $\hat{\mathcal{N}}_{x} \subset \mathcal{N}_{x} \subset \mathbb{R}^m $,
	 a neighborhood of $y_0$, $\hat{\mathcal{N}}_{y} \subset \mathcal{N}_{y} \subset \mathbb{R}^n $, and a unique smooth function $h(x): \hat{\mathcal{N}}_{x} \to \hat{\mathcal{N}}_{y}$ such that
	 \begin{equation}
	 \label{eq:defif}
	 Q(x, h(x)) = 0 \quad \forall x \in \hat{\mathcal{N}}_{x}
	 \end{equation}

According to $C_1$, for each $x \in \hat{\mathcal{N}}_{x}$, $f(x, \cdot)$ has a unique local minimizer $g(x)$ near $y_0$. Then the local minimizer must be $h(x)$ due to the first-order optimality condition. Thus $h(x) = g(x)$ for all $x \in \hat{\mathcal{N}_x}$. Computing the gradient with respect to $x$ for \meq{eq:defif}, we have
	 \begin{equation}
	 		\nabla_x g(x_0) = -\nabla_x Q( x_0, y_0 )(\nabla_y Q( x_0, y_0 ))^{-1}
	 \end{equation}
	 which leads to \meq{eq:thmgradq} by the product rule of gradient.
\end{proof}

Back to our case, let $\ell_\mathrm{map}(m;x)$ be the objective function defined in \meq{opt:maskopt4} (or \meq{opt:maskopt}) and $m_t$ be the target map. Although, strictly speaking, $\ell_\mathrm{map}(m;x)$ is not continuously differentiable, we assume $f(x, m) = \ell_\mathrm{map}(m;x)$ satisfies the conditions of Proposition\mref{lemma:simple} for analysis purpose.

Note that $m_*(x) = \arg\min_m \ell_\mathrm{map}(m;x)$ is the optimal map found by \mask for given $x$.
 At a given iteration, let $m_*$ be the optimal map of the current input $x$. We can take a step towards the direction of
 \begin{equation}
\nonumber
\varDelta =  -\nabla_{xm} \ell_\mathrm{map}(m_*;x)(\nabla_m^2 \ell_\mathrm{map}(m_*;x))^{-1} (m_* - m_t )
\end{equation} to make the map generated in the next iteration approach $m_t$. In practice we only have $\tilde{m}$, an estimate of $m_*$. Let $H = \nabla_m^2 \ell_\mathrm{map}(\tilde{m};x)$ be the Hessian matrix for fixed $x$. By plugging $\tilde{m}$ into $\varDelta$ and setting a proper step size $\alpha > 0$, we have
\begin{align}
\alpha \varDelta \approx & -\alpha\nabla_{xm} \ell_\mathrm{map}(\tilde{m};x)(\nabla_m^2 \ell_\mathrm{map}(\tilde{m};x))^{-1} (\tilde{m} - m_t )
\nonumber \\
 = &  -\alpha \nabla_{xm} \ell_\mathrm{map}(\tilde{m};x) H^{-1}(\tilde{m} - m_t) \nonumber \\
  = & \nabla_{x} \underbrace{(\tilde{m}  -\alpha H^{-1} \nabla_m \ell_\mathrm{map}(\tilde{m};x))}_\text{inner update step}  (\tilde{m} - m_t) \nonumber
  \label{eq:deltaapprox}
 \end{align}
 In our attack, we instantiate the inner update step with an Adam step to get fast convergence and to avoid the issue of vanishing Hessian for \dnns with ReLU activation.

\subsubsection*{\bf A4: Details of StAdv-based ADV$\bm{^2}$}

We first briefly introduce the concept of spatial transformation. Let $\tilde{x}_i$ be the $i$-th pixel of adversarial input $\tilde{x}$ and $(\tilde{u}_i, \tilde{v}_i)$ be its spatial coordinates. With flow-based transformation, $\tilde{x}$ is generated from another input $x$ by a per-pixel flow vector $r$, where $r_i = (\Delta u_i, \Delta v_i)$. The corresponding coordinates of $\tilde{x}_i$ in $x$ are given by $(u_i, v_i) = (\tilde{u}_i + \Delta u_i,\tilde{v}_i + \Delta v_i)$. As $(u_i, v_i)$ do not necessarily lie on the integer grid, bilinear interpolation\mcite{jaderberg:2015:nips} is used to compute $\tilde{x}_i$:
\begin{equation}
\nonumber
 \hspace*{-5pt}  \tilde{x}_i  = \sum_{j}  x_j   \max(0, 1 - |\tilde{u}_i  + \Delta u_i  - u_j |)    \max(0, 1 - |\tilde{v}_i  + \Delta v_i - v_{j} |)
\end{equation}
where $j$ iterates over the pixels adjacent to $(u_i, v_i)$ in $x$. With \adef as the underlying attack framework, \system can be constructed as optimizing the following objective:
\begin{equation}
 \hspace*{-10pt}  \min_r \, \ell_\mathrm{prd}(f(x+r), c_t) +  \lambda \ell_\mathrm{int}(g(x + r;f), m_t) +  \tau \ell_\mathrm{flow}(r)
	\label{eq:stadvloss}
\end{equation}
where $\ell_\mathrm{flow}(r) = \sum_i \sum_{j \in \mathcal{N}(i)} \sqrt{\| \Delta u_i - \Delta u_j \|_2^2 + \| \Delta v_i - \Delta v_j\|_2^2} $ measures the magnitude of spatial transformation and $\tau$ is a hyper-parameter controlling its importance. In implementation, we solve \meq{eq:stadvloss} using an Adam optimizer.

\subsubsection*{\bf A5: Details of AID}

We use \rts as a concrete example to show the implementation of \aid. In \rts, one trains a \dnn $g$ (parameterized by $\theta$) to directly predict the attribution map $g(x;\theta)$ for a given input $x$. To train $g$, one minimizes the interpretation loss:
\begin{align}
\ell_\mathrm{int}(\theta) \triangleq  & \lambda_1 r_\mathrm{tv}(g(x;\theta)) + \lambda_2 r_\mathrm{av}(g(x;\theta))  - \log \left( f_c \left( \phi (x; g(x;\theta)) \right) \right) \nonumber \\
&	  + \lambda_3 f_c \left( \phi ( x; 1 - g(x;\theta) ) \right)^{\lambda_4}
\end{align}
with all the terms defined similarly as in \meq{opt:rtsinit}.

In \aid, let $\mathcal{A}$ denote the \system attack. We further consider an adversarial distillation loss:
\begin{align}
  \ell_\mathrm{aid}(\theta) \triangleq -\| g(x;\theta) -  g(\mathcal{A}(x);\theta)  \|_1
\end{align}
which measures the difference of attribution maps of benign and adversarial inputs under the current interpreter $g(\cdot;\theta)$.

\aid trains $g$ by alternating between minimizing $\ell_\mathrm{int}(\theta)$ and minimizing $\ell_\mathrm{aid}(\theta)$ until convergence.

\subsection*{B. Parameter Setting}

Here we summarize the default parameter setting for the attacks implemented in this paper.


\subsubsection*{B1. PGD-based ADV$\bm{^2}$}

For regular \pgd, we set the learning rate $\alpha = 1./255$ and the perturbation threshold $\epsilon = 0.031$.
Table\mref{tab:hplpnorm} list the parameter setting of \pgd-based \system.

\begin{table}[!ht]{\footnotesize
\centering
		\begin{tabular}{c|l|l|l}
			& Parameter  & \rnet & \dnet  \\
			\hline \hline
			\multirow{2}[0]{*}{\grd} & \# iterations $n_\mathrm{total}$ & 800   & 800  \\
			&  	$\ell_\mathrm{int}$ coefficient ($\lambda$) & 0.007  & 0.007 \\
			\hline
			\multirow{2}{*}{\cam} & \# iterations $n_\mathrm{total}$ & 1200   & 1200  \\
			&  	$\ell_\mathrm{int}$ coefficient ($\lambda$) & 0.204/0.02  & 0.204 \\
			\hline
			\multirow{5}{*}{\mask} & \# iterations $n_\mathrm{total}$  & 1000  & 1000 \\
			& \# gradient descent steps $n_\mathrm{step}$ & 4     & 4  \\
			& \# iterations per reset $n_\mathrm{reset} $ & 50    & 50  \\
			& max. search step size $\alpha_\mathrm{max}$  & 0.08 &  0.08 \\
			& \# max. search steps  $n_\mathrm{bs}$ & 12 & 12   \\
			\hline
			\multirow{3}{*}{\rts} & \# iterations $n_\mathrm{total}$ & 1200   & 1200  \\
			& $\ell_\mathrm{int}$ coefficient ($\lambda_1$)  & 0.002/0.006 & 0.008  \\
			& $\ell_\mathrm{prd}$ coefficient ($\lambda_2$) & 0.1/-    & 0.1  \\
			\hline
		\end{tabular}%
		\caption{\small Parameter setting of \pgd-based \system. The setting for Q4 in \msec{sec:eval} (if different) is shown after `/'.
			\label{tab:hplpnorm}}}
	\end{table}%

\subsubsection*{\bf B2. StAdv-based ADV$\bm{^2}$}

Table\mref{tab:hpstadv} list the parameter setting of \adef-based \system.

	\begin{table}[!ht]{\footnotesize
		\centering
		\begin{tabular}{c|l|l|l}
	& Parameter  & \rnet & \dnet \\
			\hline \hline
			\multirow{3}[0]{*}{\grd} & \# iterations $n_\mathrm{total} $& 800   & 800 \\
			& $\ell_\mathrm{int}$ coefficient ($\lambda$) & 0.023 & 0.023 \\
			& $\ell_\mathrm{flow}$ coefficient ($\tau$) & 0.0005 & 0.0005 \\
			\hline
			\multirow{3}[0]{*}{\cam}
			& \# iterations $n_\mathrm{total} $& 600   & 600 \\
			& $\ell_\mathrm{int}$ coefficient ($\lambda$) & 0.653 & 0.653 \\
			& $\ell_\mathrm{flow}$ coefficient ($\tau$) & 0.0005 & 0.0005 \\
			\hline
			\multirow{5}[0]{*}{\mask} & \# iterations $n_\mathrm{total}$ & 1000  & 1000 \\
			& \# gradient descent steps $n_\mathrm{step}$ & 4     & 4 \\
			& \# iterations per reset $n_\mathrm{reset}$ & 50    & 50 \\
			& $\ell_\mathrm{int}$ coefficient ($\lambda$) & 500   & 500 \\
			& $\ell_\mathrm{flow}$ coefficient ($\tau$) & 0.004 & 0.005 \\
			\hline
			\multirow{4}[0]{*}{\rts} & \# iterations $n_\mathrm{total}$ & 600   & 600 \\
			& $\ell_\mathrm{int}$ coefficient ($\lambda_1$)  & 0.0408 & 0.0612 \\
			& $\ell_\mathrm{prd}$ coefficient ($\lambda_2$) & 0.1   & 0.1 \\
			& $\ell_\mathrm{flow}$ coefficient ($\tau$) & 0.0005 & 0.0005 \\
			\hline
		\end{tabular}%
		\caption{\small Parameter setting of \adef-based \system. \label{tab:hpstadv} }}
	\end{table}%

The Adam optimizer in our experiments uses the hyper-parameter setting of $(\alpha, \beta_1, \beta_2) = (0.01, 0.9, 0.999)$.


\subsection*{C. Additional Experimental Results}

Below we include more experimental results that complement the ones presented in \msec{sec:eval} and \msec{sec:tradeoff}.

\subsubsection*{\bf C1. \msec{sec:eval} Q2 Attack Effectiveness (Interpretation)}

Figure\mref{fig:ensemble-sample-den} shows a set of sample inputs (benign and adversarial) and their attribution maps generated by \grd, \cam, \mask, and \rts on \dnet.

\begin{figure}[!ht]
	\hspace{-10pt}
	\epsfig{file = 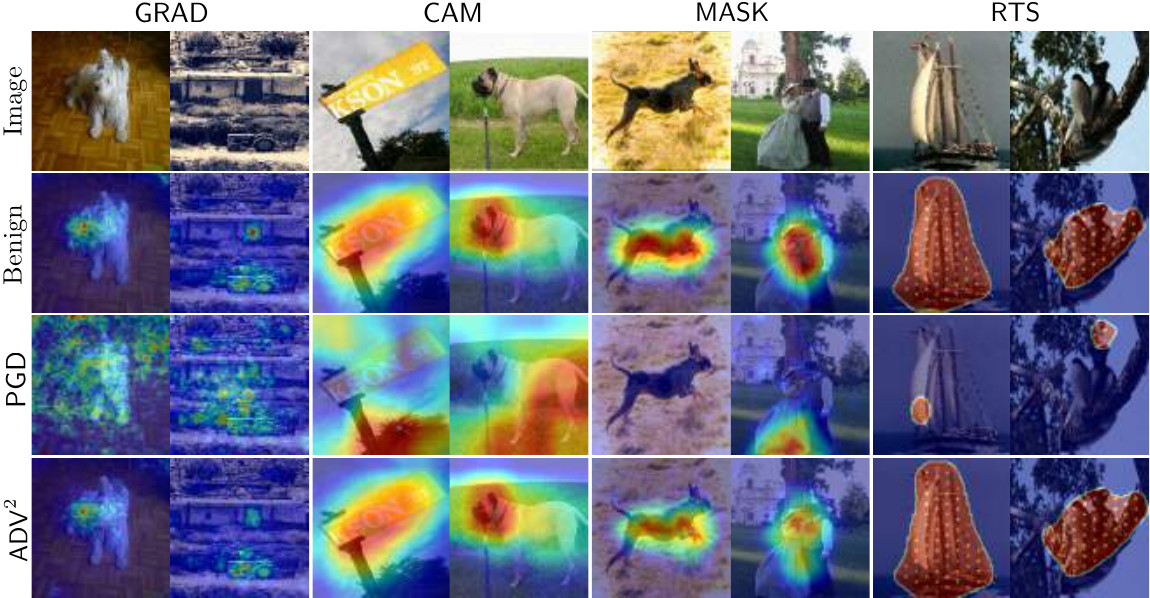, width=90mm}
	\caption{\small Attribution maps of benign and adversarial (\pgd, \system) inputs with respect to \grd, \cam, \mask, and \rts on \dnet. 	\label{fig:ensemble-sample-den}}
\end{figure}

Table\mref{tab:lpdistanceAppendix} lists the average $\mathcal{L}_\mathrm{p}$ distance ($p = 1,2$) between the attribution maps of benign and adversarial (\pgd, \system) inputs, which complements the results in Figure\mref{tab:lpdistance}. We normalize the $\mathcal{L}_2$ measures by dividing them by the square root of the number of pixels.

\begin{table}[!ht]{\footnotesize
	\centering
	\begin{tabular}{c|c|c|c|c|c}
	&	\multirow{2}{*}{Attack} & \multicolumn{2}{c|}{\rnet} &\multicolumn{2}{c}{\dnet}\\
		\cline{3-6}
		& & $\Delta$ $(\mathcal{L}_1)$ & $\Delta$ $(\mathcal{L}_2)$ & $\Delta$ $(\mathcal{L}_1)$ & $\Delta$ $(\mathcal{L}_2)$\\
		\hline
		\hline
		 		\multirow{2}{*}{\grd} & \texttt{P}  & $0.10$ & $0.14$ & $0.11$ & $0.15$ \\
		& \cellcolor{Gray} \texttt{A}		& \cellcolor{Gray}  $0.04$ & \cellcolor{Gray}  $0.07$ & \cellcolor{Gray} $0.05$ &\cellcolor{Gray}  $0.07$ \\
		\hline
	\multirow{2}{*}{\cam} &  \texttt{P}       & $0.22$  & $0.28$   & $0.31$   & $0.36 $  \\
	&	\cellcolor{Gray} \texttt{A}  & \cellcolor{Gray} $0.05$  & \cellcolor{Gray} $0.06$   &\cellcolor{Gray}  $0.04$   & \cellcolor{Gray} $0.05$  \\
		\hline
	\multirow{2}{*}{\mask}  &  \texttt{P}     & $0.28$  & $0.38$   & $0.27$   & $0.37$  \\
	&  \cellcolor{Gray} \texttt{A}     & \cellcolor{Gray} $0.09$  & \cellcolor{Gray} $0.16$   & \cellcolor{Gray} $0.09$   & \cellcolor{Gray} $0.17$  \\
		\hline
		\multirow{2}{*}{\rts} &  \texttt{P}  & $0.22$  & $0.42$   & $0.26$   & $0.48$  \\
	&	\cellcolor{Gray} \texttt{A}     &\cellcolor{Gray}  $0.01$  & \cellcolor{Gray} $0.06$   &\cellcolor{Gray}  $0.02$   & \cellcolor{Gray} $0.09$  \\
		\hline
	\end{tabular}
	\caption{\small $\mathcal{L}_\mathrm{p}$ distance between attribution maps of benign and adversarial (P-\pgd, A-\system) inputs. \label{tab:lpdistanceAppendix}}}
\end{table}


\subsubsection*{\bf C2. \msec{sec:eval} Q4 Real Application}

In \msec{sec:eval}, we use the dataset from the ISIC 2018 challenge\footnote{\url{https://challenge2018.isic-archive.com/task3/}},
and adopt a competition-winning model\mcite{gessert:2018:arxiv} (with \rnet as its backbone), which achieves the second place in the challenge. The confusion matrix in Figure\mref{fig:confusion_matrix} shows the performance of the classifier in our study.

\begin{figure}[!ht]
	\centering
	\epsfig{file = 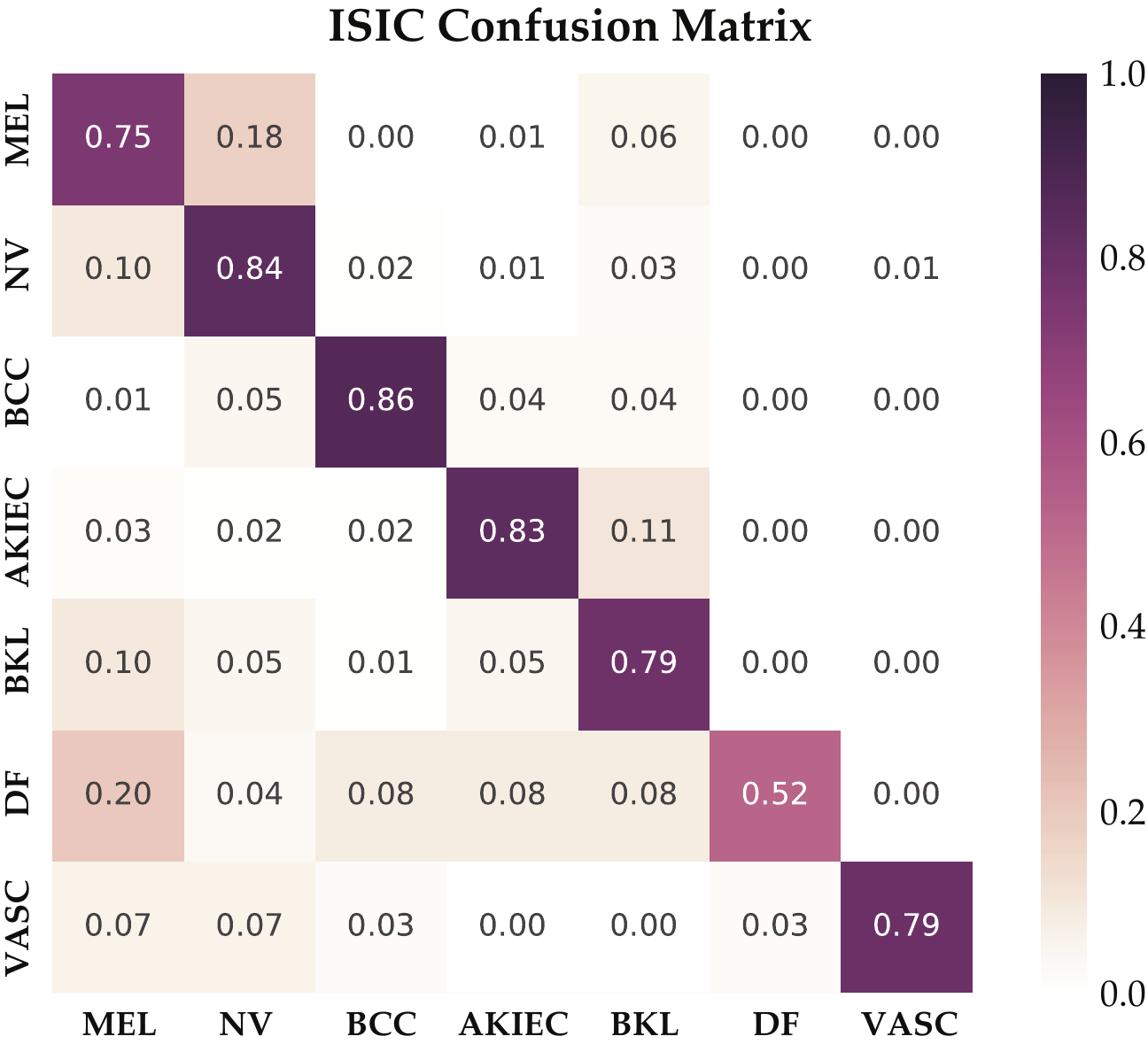, width=55mm}
	\caption{\small Confusion matrix of the classifier used in \msec{sec:eval} Q4 on the ISIC 2018 challenge dataset\mcite{gessert:2018:arxiv}.	\label{fig:confusion_matrix}}
\end{figure}


Table\mref{tab:lpdistanceskin} lists the average $\mathcal{L}_\mathrm{p}$ distance ($p = 1,2$) between the attribution maps of benign and adversarial (\pgd, \system) inputs in the case study of skin cancer diagnosis.

\begin{table}[!ht]{\footnotesize
	\centering
	\begin{tabular}{c|c|c|c}
	&  Attack  	& $\Delta$ $(\mathcal{L}_1)$ & $\Delta$ $(\mathcal{L}_2)$ \\
		\hline
		\hline
		\multirow{2}{*}{\grd} & \texttt{P} & $0.19$ & $0.23$ \\
		& \cellcolor{Gray} \texttt{A} 	& \cellcolor{Gray} $0.06$ & \cellcolor{Gray} $0.09$ \\
		\hline
		\multirow{2}{*}{\cam} &  \texttt{P}   & $0.25$  & $0.31$ \\
	&	\cellcolor{Gray} \texttt{A}   & \cellcolor{Gray} $0.06$  & \cellcolor{Gray} $0.08$\\
		\hline
	\multirow{2}{*}{\mask}  &  \texttt{P} & $0.23$  & $0.30$ \\
		&  \cellcolor{Gray} \texttt{A}     & \cellcolor{Gray} $0.08$  & \cellcolor{Gray} $0.11$ \\
		\hline
		\multirow{2}{*}{\rts} &  \texttt{P} & $0.15$  & $0.26$\\
	&	\cellcolor{Gray} \texttt{A}    & \cellcolor{Gray} $0.02$  & \cellcolor{Gray}$0.07$ \\
		\hline
	\end{tabular}
	\caption{\small $\mathcal{L}_\mathrm{p}$ distance of attribution maps of benign and adversarial (\pgd, \system) inputs in the case study of skin cancer diagnosis.
		\label{tab:lpdistanceskin}}}
\end{table}

\subsubsection*{\bf C3. \msec{sec:eval} Q5 Alternative Attack Framework}

Figure\mref{fig:ensemble-sample-Dense-st} visualizes attribution maps of benign and adversarial (\adef, \adef-based \system) inputs on \dnet. Figure\mref{fig:stadv-iou} further compares the $\mathcal{L}_1$ measures and \iou scores (w.r.t. benign cases) of adversarial inputs on \dnet.

\begin{figure}[!ht]
\epsfig{file = 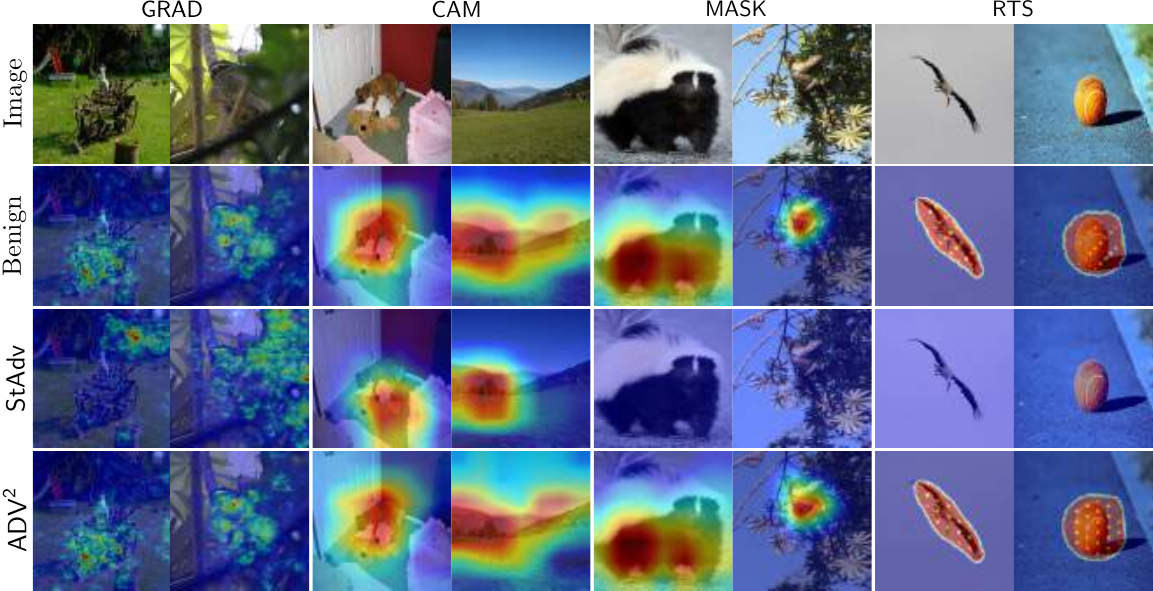, width=85mm}
\caption{\small Attribution maps of benign and adversarial (\adef and \adef-based \system) inputs on \dnet. \label{fig:ensemble-sample-Dense-st}}
\end{figure}

\begin{figure}[!ht]
	\epsfig{file = 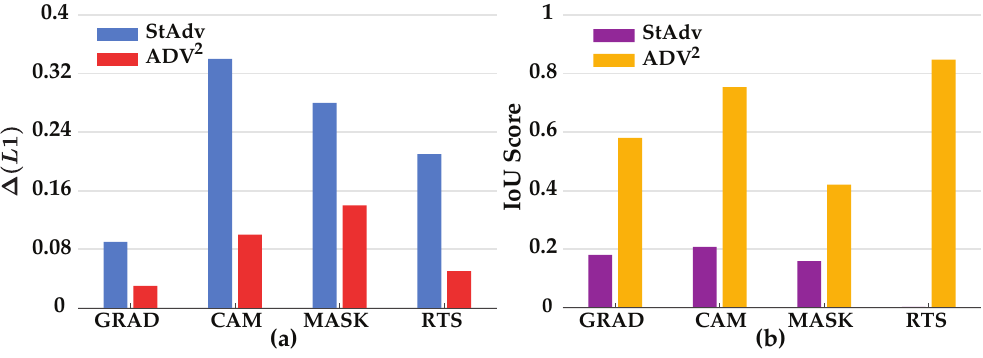, width=85mm}
	\caption{\small$\mathcal{L}_1$ measures and \iou scores of adversarial (\adef, \adef-based \system) inputs w.r.t. benign maps on \dnet.
		\label{fig:stadv-iou}}
\end{figure}

\subsubsection*{\bf C4. \msec{sec:tradeoff} Q1 Random Class Interpretation}

Figure\mref{fig:target-map-den} visualizes attribution maps of target and adversarial (\system) inputs on \dnet, which complements the results shown in Figure\mref{fig:target-map-res}. Figure\mref{fig:ioutest-target} compares the $\mathcal{L}_1$ measures and \iou scores of adversarial maps w.r.t. benign and target cases on \dnet.

\begin{figure}[!ht]
	\hspace{-10pt}
	\epsfig{file = 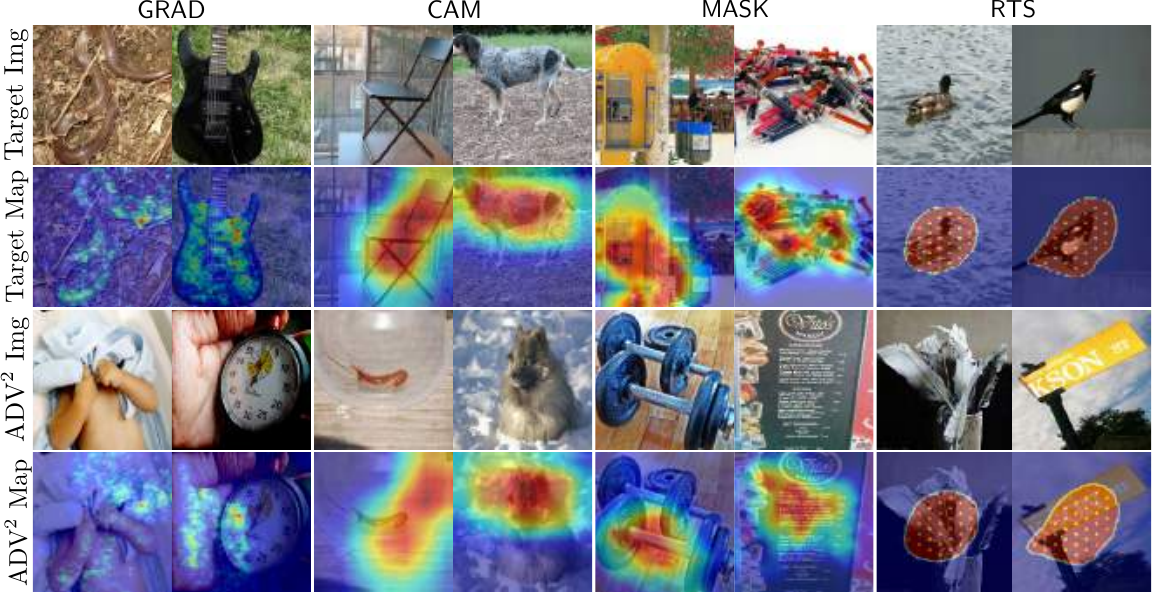, width=85mm}
	\caption{\small Target and adversarial (\system) inputs and their attribution maps on \dnet.
\label{fig:target-map-den}}
\end{figure}

\begin{figure}[!ht]
	\epsfig{file = 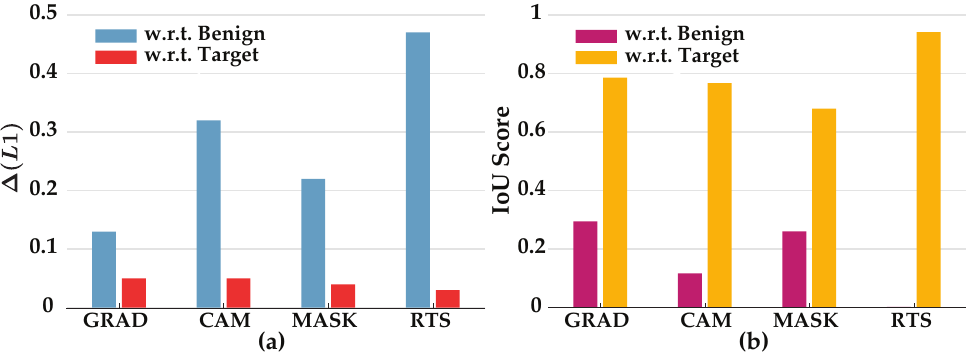, width=85mm}
	\caption{\small $\mathcal{L}_1$ measures (a) and IoU scores (b) of adversarial maps with respect to benign and target cases on \dnet.
		\label{fig:ioutest-target}}
\end{figure}

\end{document}